\documentclass[journal]{IEEEtran}
\usepackage{amsmath,amssymb,mathtools,bbm}
\usepackage{graphicx}
\usepackage[caption=false,font=footnotesize]{subfig}
\usepackage{cite}
\usepackage[T1]{fontenc}
\usepackage[utf8]{inputenc}
\newtheorem{proposition}{Proposition}

\renewcommand{\P}{\mathsf{P}}
\newcommand{\indicator}[1]{\mathbbm{1}\{#1\}}
\newcommand{\bfa}{\mathbf{a}}
\newcommand{\bfv}{\mathbf{v}}
\newcommand{\bfh}{\mathbf{h}}
\newcommand{\bfU}{\mathbf{U}}
\newcommand{\bfPhi}{\boldsymbol{\Phi}}
\newcommand{\bfQ}{\mathbf{Q}}
\newcommand{\bfJ}{\mathbf{J}}
\newcommand{\bfH}{\mathbf{H}}
\newcommand{\bfR}{\mathbf{R}}
\newcommand{\bfP}{\mathbf{P}}
\newcommand{\bfK}{\mathbf{K}}
\newcommand{\bfI}{\mathbf{I}}
\newcommand{\bfx}{\mathbf{x}}
\newcommand{\bfn}{\mathbf{n}}
\newcommand{\bfz}{\mathbf{z}}
\newcommand{\bfSig}{\boldsymbol{\Sigma}}
\newcommand{\bfmu}{\boldsymbol{\mu}}
\newcommand{\bfX}{\mathbf{X}}
\newcommand{\bfW}{\mathbf{W}}
\newcommand{\bfy}{\mathbf{y}}
\newcommand{\bfw}{\mathbf{w}}
\newcommand{\bfr}{\mathbf{r}}
\newcommand{\bfxi}{\boldsymbol{\xi}}
\newcommand{\bfPi}{\boldsymbol{\Pi}}
\newcommand{\bfeps}{\boldsymbol{\varepsilon}}
\newcommand{\dtv}{d_{\mathrm{TV}}}

\begin{document}

\title{Generative Diffusion Models for Secure UAV-ISAC with Real-Time Target Tracking}

\author{Homa~Nikbakht,
        Yonina~C.~Eldar,~\IEEEmembership{Fellow,~IEEE},
        and~H.~Vincent~Poor,~\IEEEmembership{Life Fellow,~IEEE}%
}

\maketitle

\begin{abstract}
Unmanned aerial vehicle (UAV) platforms with integrated sensing and communication (ISAC) are prime candidates for surveillance missions, in which a dual-function drone tracks a mobile target and alerts a ground station when the target enters a protected region. Such alerts generate event-triggered communication traffic. By observing the change in the traffic, an eavesdropper (Eve) can therefore learn in real time whether the target is detected inside the protected region without decoding any message. Existing physical-layer security techniques protect the message content or hide that a transmission occurs, but the composition of the traffic remains exposed. We formulate the threat as traffic-composition covertness and propose a keyless diffusion-based transmission scheme that hides the triggered traffic. The scheme superposes the beamformed information streams of the UAV with a message-agnostic safeguard waveform generated by a conditional denoising diffusion implicit model. Training shapes the safeguard distribution such that triggered and non-triggered slots become statistically indistinguishable at Eve. The legitimate user shares no secret key with the UAV.
We formulate the joint design as a rate maximization under covertness, reliability, and sensing constraints, with the sensing accuracy measured by the posterior Cram\'er-Rao bound (PCRB) of the beam-dependent tracking problem.
Our numerical results show that, with multiple transmit antennas and an Eve signal-to-noise ratio (SNR) of $5$~dB, a learned Eve stays near the blind guess of $0.5$ for every safeguard beam split, a detector that knows the safeguard model reaches $0.71$ to $0.73$ inside Bob's main lobe and $0.53$ to $0.55$ outside it, while the average position PCRB varies only between $0.59$ and $0.70$~m$^2$.

\end{abstract}

\begin{IEEEkeywords}
Integrated sensing and communication, UAV communications, physical-layer security, covert communication, generative AI, diffusion models, posterior Cram\'er-Rao bound.
\end{IEEEkeywords}

\section{Introduction}\label{sec:intro}

Integrated sensing and communication (ISAC) enables a wireless system to sense and communicate with shared hardware, antennas, and spectrum, and is a key candidate technology for sixth-generation (6G) networks \cite{Liu2022, Visa2024, Ahmadi2026, NikbakhtISIT24}. Unmanned aerial vehicles (UAVs) offer mobility, rapid deployment, and line-of-sight-dominated air-to-ground channels. A single dual-function transceiver on a UAV can serve ground users while tracking mobile targets in real time \cite{Qin2025, Luat2025, NikbakhtISIT25, Ahmed2025}. UAV-ISAC is therefore suitable for surveillance and protection missions, in which a drone monitors a restricted area and reports its sensing outcomes to a ground station. In such missions the two functions are coupled in both directions. The echo returns update the communication beam, as in radar-assisted predictive beamforming \cite{Liu2020PB} and learning-based ISAC adaptation \cite{NikbakhtAsilomar24}, and the sensing outcomes determine the communication traffic, because the system sends alarms and tracks updates when it detects a target.

Physical-layer security for ISAC has been studied in the literature \cite{Su2021, Boroujeni2026, Goel2008, Bash2013, Zhou2019, Tang2024,  Yao2025}. Secrecy-oriented designs protect the message content or shape the transmit beampattern against malicious sensing targets \cite{Su2021}, and randomized frequency-hopping waveforms hide the radar parameters from a passive eavesdropper (Eve) \cite{Boroujeni2026}. Artificial-noise beamforming masks the information signal at Eve with a jamming component \cite{Goel2008}. Covert communication hides the existence of a transmission from a warden \cite{Bash2013, Zhou2019}, and recent extensions embed the hidden communication inside a dual-functional ISAC waveform \cite{Tang2024}. For UAV platforms, secure ISAC designs jointly optimize beamforming and trajectory against eavesdroppers \cite{Yao2025}.

These techniques leave the traffic pattern unprotected. In surveillance missions, the communication traffic is event-triggered, and a sensing outcome can change the transmission in several ways. The transmitter may send an additional message, increase the data rate, or reallocate power once the target enters a protected region. Any change of the traffic that depends on the sensing outcome is itself sensing information. By observing the change, Eve learns whether and when the system has detected the target without decoding any transmission. Content protection does not address this leakage, because Eve ignores the content, and existence-hiding does not either, because the transmitter is always on. The quantity to protect is the \emph{composition} of the traffic across time slots. To the best of our knowledge, no published work addresses this threat.

Hiding the traffic composition has three requirements. First, Eve's observations in triggered and non-triggered slots must follow the same distribution. The masking therefore needs to be random, with a trainable distribution that makes the two slot types indistinguishable. Second, the scheme should operate without a shared secret key, since distributing and protecting a key on a UAV that can be captured are operational risks. The legitimate user (Bob) must then decode through masking randomness that cannot be canceled. Third, the masking must be generated using the available transmit resources, and covertness, sensing accuracy, and communication reliability therefore compete within a single transmit signal.

In this paper, we propose a secure transmission scheme for a multiple-input single-output (MISO) UAV-ISAC system that addresses all three requirements. The core of the scheme is a denoising diffusion implicit model (DDIM) \cite{Song2021}, a generative artificial intelligence (AI) model that transforms Gaussian noise into samples of a trainable distribution through a sequence of denoising steps \cite{Ho2020}. Conditioned only on public context, the DDIM draws new initial noise in every slot and outputs an unpredictable, message-agnostic safeguard waveform, and hence serves as a randomized encoder. Training shapes the safeguard distribution such that Eve's observation becomes approximately invariant to whether a slot carries an actual aperiodic message or dummy bits. The safeguard is thus an artificial noise in the sense of \cite{Goel2008} whose entire distribution is a design variable, and provides the trainable masking that handles the first requirement. Existing uses of generative AI in physical-layer security are surveyed in \cite{ZhaoGenAI2024}. In the surveyed works, the generator produces transmission strategies such as beamforming or artificial-noise solutions rather than the transmitted signal itself. Closest to our approach, in \cite{Wang2025} diffusion-generated signals are modulated onto the transmitted pilots to prevent unauthorized sensing of user activity from the channel state. In our design, by contrast, the generator output itself is the transmitted randomness and hides the communication traffic rather than the channel.

Under this scheme, Bob  also shares no seed with the UAV which addresses the second requirement.
The third requirement shapes the transmit design because the masking, sensing illumination, and message transmission share a single transmit signal while satisfying a power constraint.
 This setting thus creates a communication-sensing-covertness trade-off that no existing work has studied.
The main contributions of this work are as follows.
\begin{itemize}
\item We develop a UAV-ISAC system equipped with a uniform linear array (ULA) which tracks a mobile target from its echo returns and serves Bob over the same downlink. The downlink carries a periodic stream in every slot and an additional aperiodic message whenever the UAV declares that the target is inside the protected region. The event-triggered message on top of an always-on stream leads to a new threat model, termed traffic-composition covertness.
\item Our main contribution is a keyless diffusion-based safeguard that hides the traffic the trigger generates. We superpose the beamformed information streams with a message-agnostic DDIM safeguard whose trained distribution makes triggered and non-triggered slots statistically indistinguishable at Eve, with the safeguard distribution as the design variable. We measure covertness in total variation and train it by minimizing the maximum mean discrepancy \cite{Gretton2012} together with the loss of an adversarial classifier \cite{Goodfellow2014}, while the public aperiodic codebook is optimized jointly with the safeguard.
\item We formulate the joint communication-sensing-security design as a rate maximization under covertness, reliability, and sensing constraints. The sensing accuracy is measured by the posterior Cram\'er-Rao bound (PCRB) of the target tracking problem. The PCRB also yields the trigger rule for the aperiodic traffic, computed up to a single one-dimensional integral over the angle. The problem is solved by penalty-based end-to-end training. Since Bob shares no key with the UAV, we design a denoising decoder that Bob trains on public safeguard samples.
\end{itemize}
Our numerical analysis validates the design. The trigger reproduces the target's true presence in the protected region in $120$ of $123$ in-region slots with no false activations.
Shifting the safeguard power between masking Eve and illuminating the region changes the average position PCRB only between $0.59$ and $0.70$~m$^2$, while the learned Eve stays near the blind-guess level of $0.5$ and the model-aware Eve between $0.71$ and $0.73$ when located inside Bob's main lobe. In the single-antenna case, where Eve and Bob observe the same signal, the safeguard holds the learned Eve at or below $0.55$ for Eve signal-to-noise ratios (SNRs) from $0$ to $12$~dB, reaches $0.69$ to $0.80$ without the safeguard.

The paper is organized as follows. Section~\ref{sec:model} introduces the system model, the threat model, and the problem formulation. Section~\ref{sec:sensing} presents the tracking, PCRB, and trigger design. Section~\ref{sec:scheme} describes the keyless diffusion-based transmission and the receivers. Section~\ref{sec:objective} presents the training procedure that solves the design problem. Section~\ref{sec:numerical} represents the numerical analysis and Section~\ref{sec:conclusion} concludes the paper.

\section{System Model and Problem Formulation}\label{sec:model}

\begin{figure}[t]
\centering
\includegraphics[width=0.7\columnwidth]{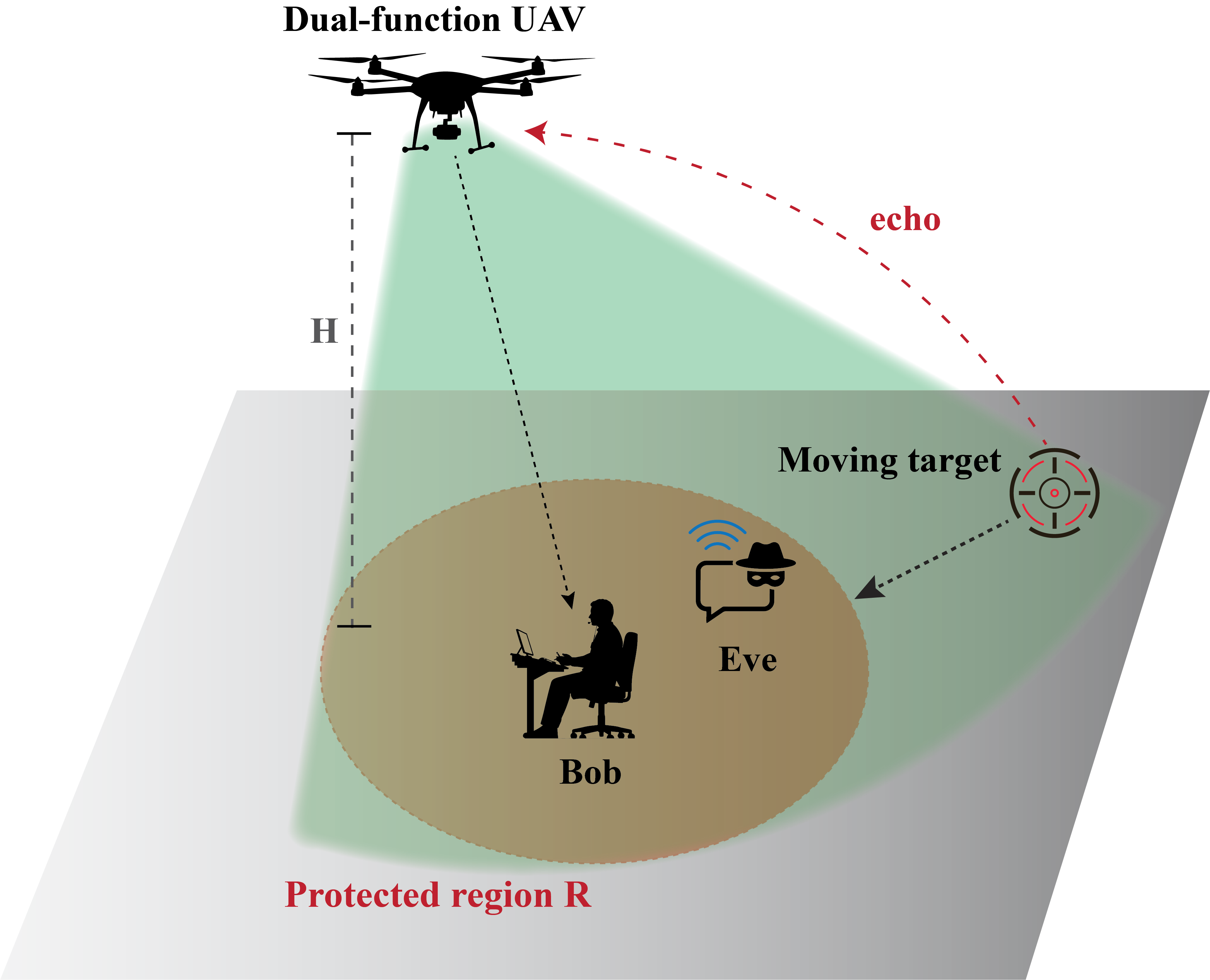}
\caption{Illustration of the system model. A dual-function UAV tracks a moving target while communicating with Bob. The downlink carries a periodic stream in every TS, and an aperiodic message is added only when the target lies inside the protected region. Eve observes the downlink and attempts to detect the triggered transmissions.}%
\label{fig:system}
\end{figure}

\subsection{Geometry and Traffic Model}
Consider a UAV-assisted ISAC system, illustrated in Fig.~\ref{fig:system}, in which a dual-function UAV communicates with a legitimate user (Bob) in the presence of an eavesdropper (Eve), and simultaneously as a mono-static radar tracks a mobile ground target\footnote{We consider a single passive single-antenna Eve and a single point target; multi-antenna wardens and multi-target scenarios are left for future work.}. The UAV hovers at altitude $H$ above the origin of a three-dimensional Cartesian system\footnote{The UAV position is fixed throughout; joint trajectory and energy-aware designs \cite{Ahmed2025} are left for future work.} and carries a half-wavelength ULA with $N$ elements aligned with the $x$-axis, used for both transmission and echo reception. Bob, Eve, and the target are on the ground at horizontal positions $\mathbf b = [x_b, y_b]$, $\mathbf e = [x_e, y_e]$, and $\mathbf p_t[\ell] = [x_t[\ell], y_t[\ell]]$, respectively. Bob and Eve are static while the target moves. The total flight time $T$ is divided into $L$ time slots (TS) of duration $\delta := T/L$, and $n$ symbols are transmitted per TS.

The UAV communicates two traffic types to Bob: \emph{periodic} traffic, transmitted over the entire flight, and \emph{aperiodic} traffic, triggered only when the target is predicted to lie within a protected region $\mathcal R$, modeled as a disc of radius $R_0$ centered at Bob. Eve's goal is to determine which TSs carry aperiodic traffic and thereby infer whether the target has entered $\mathcal R$.

For a ground position $\mathbf p = [x_p, y_p]$, the direction cosine of the line of sight from the UAV with respect to the array axis is
\begin{equation}\label{eq:dcos}
u(\mathbf p) := \frac{x_p}{\sqrt{H^2 + \|\mathbf p\|^2}},
\end{equation}
and the corresponding steering vector is $\bfa(u) = [1, e^{j\pi u}, \ldots, e^{j\pi (N-1) u}]^T$, where $N$ is the number of ULA elements at the UAV. We write $u_b := u(\mathbf b)$, $u_e := u(\mathbf e)$, and $u_t[\ell] := u(\mathbf p_t[\ell])$. The UAV knows Bob's direction accurately and has access only to a nominal value $\hat u_e$ of Eve's direction cosine. The true value is modeled as
\begin{equation}\label{eq:evecsimodel}
u_e = \hat u_e + \delta_e, \qquad |\delta_e| \le \Delta_e,
\end{equation}
where $\delta_e$ is an unknown error with known maximum magnitude $\Delta_e$. The case $\Delta_e = 0$ corresponds to perfect knowledge of Eve's direction.

\subsection{Dual-Function Transmit Signal}\label{sec:txmodel}
The transmit signal of each TS composes of three unit-average-power waveforms in $\mathbb C^{1\times n}$ and three spatial directions. We introduce them in turn and state their combination in \eqref{eq:U} at the end of this subsection.

\subsubsection{Periodic Waveform $\bfX_{p,\ell}$}
The periodic stream carries a new message in every TS. The message $m_{p,\ell} \in \{1, \ldots, M_p\}$ is encoded as $\bfX_{p,\ell} = f_p(m_{p,\ell})$, where $f_p$ is the encoder of a public short-blocklength channel code whose $M_p$ codewords are equal-energy sequences on the power shell $\{\bfx : \|\bfx\|^2 = n\}$. The periodic stream is always present and  is independent of the sensing outcome. It represents the regular service traffic of the downlink, defines the stationary emission profile inside which the event-triggered traffic must hide, and also provides the radar illumination of the protected region.

\subsubsection{Aperiodic Waveform $\bfX_{a,\ell}$}
The aperiodic stream carries the event-triggered traffic and is the security-critical component of the transmit signal. Its content is governed by the binary trigger state $a[\ell] \in \{0,1\}$, which the sensing layer sets to $a[\ell] = 1$ when it declares the target inside the protected region $\mathcal R$ (Section~\ref{sec:sensing}). Since an on-off transmission reveals the trigger immediately, the UAV transmits the stream in every TS and selects the index as
\begin{equation}\label{eq:mtilde}
\tilde m_{a,\ell} = \begin{cases} m_{a,\ell}, & a[\ell] = 1,\\ d_\ell, & a[\ell] = 0,\end{cases}
\end{equation}
where $m_{a,\ell}$ denotes an actual aperiodic message and the dummy index $d_\ell$ is drawn uniformly from a public dummy set, a standard first step in covert communication \cite{Bash2013}. The map $\bfX_a(\cdot)$ assigns to each index a deterministic codeword from a public codebook, and $\bfX_{a,\ell} := \bfX_a(\tilde m_{a,\ell})$.
The codewords are \emph{power-equalized}, with $\|\bfX_a(m)\|^2 = n$ for every index $m$ including the dummy ones, and no index can be identified from the codeword energy. The codewords are, however, not placed at random on the power shell as random placement forms message and dummy clusters that Eve separates reliably. The placement is therefore a design variable that we optimize jointly with the safeguard in Section~\ref{sec:objective}, and the optimized codewords remain on the shell to preserve power equalization.

\subsubsection{Safeguard Waveform $\bfW_{\theta,\ell}$}
Continuous transmission hides the on-off pattern and power equalization hides the codeword energy, yet the trigger still leaks. Each codeword is a fixed point in $\mathbb C^{1\times n}$ from a public codebook. Eve can correlate her received signal with all codewords, identify the most likely one, and check whether it belongs to the message set or to the dummy set, a test that succeeds with high probability at moderate SNR. The trigger state thus leaks through the \emph{positions} of the codewords in signal space, which we refer to as the codeword geometry.

The safeguard removes this leak by adding randomness that Eve cannot subtract. At every TS, the UAV draws new Gaussian noise, passes the noise through a conditional denoising diffusion implicit model (DDIM) \cite{Ho2020, Song2021} with parameter $\theta$, conditioned on a public tag $\psi_\ell$ of slot parameters, and superimposes the resulting waveform $\bfW_{\theta,\ell}$ on the transmission. The waveform is \emph{message-agnostic}, drawn independently of $\tilde m_{a,\ell}$, and its realization therefore carries no trigger information. The safeguard distribution is shaped during an offline training phase (Section~\ref{sec:objective}) and is public once training is complete. Covertness does not require hiding the distribution. Training shapes the distribution such that message and dummy codewords induce nearly the same observation distribution at Eve, and the indistinguishability persists even though Eve knows the trained distribution exactly, because the realization drawn in each TS remains unknown. The safeguard is thus an artificial noise \cite{Goel2008} whose distribution is learned. The scheme is \emph{keyless}. No secret seed is shared between the UAV and Bob, and Bob knows the safeguard only through the public pair $(\theta, \psi_\ell)$.

\subsubsection{Spatial Directions}
Each of the three waveforms is transmitted along a unit-norm direction in $\mathbb C^{N}$, and a ground receiver at direction cosine $u$ observes the waveform with an amplitude proportional to the inner product between $\bfa(u)$ and the transmit direction. A transmit direction that adapts to the sensing output is itself a side channel. If a beam follows the tracked target, the gain at Eve varies with the target state, and the received power alone reveals the track and hence the trigger. We therefore require every transmit direction to be \emph{measurement-independent}, i.e., statistically independent of the sensing measurements, and we sense by illuminating the protected region as a whole rather than by pointing a beam at the target which is in line with low-probability-of-intercept radar principles \cite{Stove2004, Pace2009}.

Bob must decode the aperiodic waveform $\bfX_{a,\ell}$ within its own TS, and the waveform is therefore transmitted along the maximum-ratio direction $\bfa(u_b)/\sqrt N$, which maximizes the received power at Bob. The periodic waveform $\bfX_{p,\ell}$ is transmitted along the \emph{region-illumination beam} $\bfw_{\mathcal R}$, a fixed unit-norm vector whose beampattern is approximately constant over the direction-cosine set $\mathcal U_{\mathcal R} := \{u(\mathbf p) : \mathbf p \in \mathcal R\}$ of the protected region. With $u_1, \ldots, u_K$ denoting a uniform grid of $\mathcal U_{\mathcal R}$, we obtain $\bfw_{\mathcal R}$ by beampattern matching \cite{Stoica2007},
\begin{equation}\label{eq:wr}
(\bfw_{\mathcal R}, \mu^\star) := \arg\min_{\|\bfw\| = 1,\; \mu > 0}\; \sum_{k=1}^{K} \Big( \big|\bfa(u_k)^{\mathsf H} \bfw\big| - \mu \Big)^2,
\end{equation}
which selects the unit-norm vector whose response magnitude stays closest to a common level $\mu$ at every grid direction, i.e., a flat sector beam. We solve \eqref{eq:wr} by alternating least squares. The beam is computed once, offline, from the public region geometry, and illuminates every in-region position at a near-uniform gain.

The safeguard direction must satisfy three requirements. It must stay away from Bob, to leave the decoding undisturbed. It should reach Eve, where the safeguard hides the codeword geometry. It can also reinforce the region illumination, because the echo SNR grows with the total power arriving at the target. We therefore construct the safeguard direction from the region-illumination beam $\bfw_{\mathcal R}$ and from the steering vector $\bfa(\hat u_e)$ of Eve's \emph{nominal} direction. Projecting both onto the orthogonal complement of $\bfa(u_b)$ and normalizing to unit norm yields the projected region direction $\tilde\bfv_r$ and the projected Eve direction $\tilde\bfv_e$, given by
\begin{equation}\label{eq:vtilde}
\tilde \bfv_r := \frac{\bfPi_b\, \bfw_{\mathcal R}}{\big\| \bfPi_b\, \bfw_{\mathcal R} \big\|}, \qquad \tilde \bfv_e := \frac{\bfPi_b\, \bfa(\hat u_e)}{\big\| \bfPi_b\, \bfa(\hat u_e) \big\|},
\end{equation}
where $\bfPi_b := \bfI - \bfa(u_b)\bfa(u_b)^{\mathsf H}/N$ denotes the projector. Among all directions with a spatial null at Bob, $\tilde\bfv_r$ and $\tilde\bfv_e$ are the closest to $\bfw_{\mathcal R}$ and to $\bfa(\hat u_e)$, respectively. Combining the two with a weight $\omega \in [0,1]$ and normalizing gives the safeguard direction
\begin{equation}\label{eq:beams}
\bfv(\omega) := \frac{\sqrt{\omega}\, \tilde\bfv_r + \sqrt{1-\omega}\, \tilde\bfv_e}{\big\| \sqrt{\omega}\, \tilde\bfv_r + \sqrt{1-\omega}\, \tilde\bfv_e \big\|}.
\end{equation}
Every such combination satisfies $\bfa(u_b)^{\mathsf H} \bfv(\omega) = 0$. This spatial null follows the artificial-noise principle of \cite{Goel2008} and makes Bob's received signal safeguard-free in \eqref{eq:ybeff}. With $\omega = 1$ all safeguard energy reinforces the region illumination, and with $\omega = 0$ the full energy points at Eve for maximum masking.
All three directions are fixed over the flight, and none depends on the sensing output.

\subsubsection{Composite Transmit Signal}
The UAV superimposes the three waveforms, with the share $\lambda \in [0,1]$ dividing the power budget between the periodic stream and the aperiodic slot and $\gamma \in (0,1]$ dividing the aperiodic power between the codeword and the safeguard. The transmitted $N \times n$ signal of TS~$\ell$ is
\begin{IEEEeqnarray}{rCl}\label{eq:U}
\bfU_\ell &=& \sqrt{\lambda \P}\, \bfw_{\mathcal R}\, \bfX_{p,\ell} + \sqrt{(1-\lambda)\gamma \P}\, \frac{\bfa(u_b)}{\sqrt N}\, \bfX_{a,\ell} \IEEEnonumber\\
&& +\; \sqrt{(1-\lambda)(1-\gamma)\P}\; \bfv(\omega)\, \bfW_{\theta,\ell},
\end{IEEEeqnarray}
where $\P$ is the power budget. All waveforms have unit average power and all directions unit norm, and the average power constraint $\frac{1}{n}\mathbb E[\|\bfU_\ell\|_F^2] \le \P$ holds with equality.

\subsection{Communication Channels}\label{sec:commchannels}
Bob's and Eve's channels are line-of-sight, $\bfh_b = g_b \bfa(u_b)$ and $\bfh_e = g_e \bfa(u_e)$, with complex path gains $g_b$ and $g_e$. The UAV knows Bob's channel exactly. For Eve, the UAV knows the path gain only up to an SNR interval, published through the tag $\psi_\ell$, and the direction only through the nominal value $\hat u_e$ in \eqref{eq:evecsimodel}. At the end of TS~$\ell$, the received signals at Bob and Eve are
\begin{IEEEeqnarray}{rCl}
\bfy_{b,\ell} &=& \bfh_b^{\mathsf H} \bfU_\ell + \bfz_{b,\ell}, \qquad \bfz_{b,\ell} \sim \mathcal N_{\mathbb C}(0, \sigma_b^2 \bfI),\label{eq:yb}\\
\bfy_{e,\ell} &=& \bfh_e^{\mathsf H} \bfU_\ell + \bfz_{e,\ell}, \qquad \bfz_{e,\ell} \sim \mathcal N_{\mathbb C}(0, \sigma_e^2 \bfI).\label{eq:ye}
\end{IEEEeqnarray}
Substituting \eqref{eq:U} and using $\bfa(u_b)^{\mathsf H} \bfv(\omega) = 0$ gives
\begin{IEEEeqnarray}{rCl}
\bfy_{b,\ell} &=& g_b^* \sqrt{N} \Big( \beta_p \sqrt{\lambda \P} \bfX_{p,\ell} + \sqrt{(1-\lambda)\gamma\P} \bfX_{a,\ell} \Big) + \bfz_{b,\ell}, \IEEEeqnarraynumspace \label{eq:ybeff}\\
\bfy_{e,\ell} &=& g_e^* \sqrt{N} \Big( \alpha_p \sqrt{\lambda \P} \bfX_{p,\ell} + \alpha_c \sqrt{(1-\lambda)\gamma\P} \bfX_{a,\ell} \IEEEnonumber\\
&& \quad +\; \alpha_s(\omega) \sqrt{(1-\lambda)(1-\gamma)\P}\, \bfW_{\theta,\ell} \Big) + \bfz_{e,\ell}, \label{eq:yeeff}
\end{IEEEeqnarray}
where $(\cdot)^*$ denotes complex conjugation, $\beta_p := \bfa(u_b)^{\mathsf H} \bfw_{\mathcal R}/\sqrt N$ and $\alpha_p := \bfa(u_e)^{\mathsf H} \bfw_{\mathcal R}/\sqrt N$ are the gains of the region beam at Bob and Eve, and $\alpha_c := \bfa(u_e)^{\mathsf H} \bfa(u_b) / N$ and $\alpha_s(\omega) := \bfa(u_e)^{\mathsf H} \bfv(\omega)/\sqrt N$ are the leakage coefficients of the information and safeguard beams at Eve, all with magnitude at most one.

Bob's observation \eqref{eq:ybeff} is thus \emph{safeguard-free}, and the covertness randomization adds no interference at Bob. Eve's observation \eqref{eq:yeeff} contains the safeguard scaled by $\alpha_s(\omega)$, which couples covertness to the beam split. We consider the regime in which Eve lies \emph{inside} Bob's main lobe, where the information stream leaks strongly toward Eve. Since the codeword must be beamformed toward Bob, no choice of transmit directions can weaken this leak, and only the safeguard can provide covertness.

\subsection{Radar Echo and Measurement Model}\label{sec:echomodel}
The target echo received at the UAV array during TS $\ell$ is
\begin{equation}\label{eq:echo}
\mathbf Y_{s,\ell}(t) = \beta\, \bfa(u_t[\ell]) \bfa(u_t[\ell])^{\mathsf H} \bfU_\ell(t - \tau[\ell])\, e^{j 2\pi f_D t} + \mathbf Z_{s,\ell}(t),
\end{equation}
where $t$ denotes continuous time within the slot, $\beta$ represents the reflection coefficient and the two-way path loss, $\tau[\ell] = 2 d_{ut}[\ell]/c$ is the round-trip delay at the UAV-target distance $d_{ut}[\ell] = \sqrt{H^2 + \|\mathbf p_t[\ell]\|^2}$, $f_D$ is the Doppler shift, and $\mathbf Z_{s,\ell}$ is additive white Gaussian noise.

The transmit beampattern determines how much power reaches the target. The average power that \eqref{eq:U} radiates toward a direction $u$ has its largest value $N\P$ when the entire budget is beamformed toward $u$. Normalizing this radiated power by its maximum $N\P$ yields the normalized beampattern
\begin{IEEEeqnarray}{rCl}\label{eq:beampattern}
B(u) &:=& \frac{\mathbb E\big[\|\bfa(u)^{\mathsf H} \bfU_\ell\|^2\big]}{n\, N \P} \IEEEnonumber\\
&=& \lambda\, \frac{|\bfa(u)^{\mathsf H} \bfw_{\mathcal R}|^2}{N} + (1-\lambda)\gamma\, \frac{|\bfa(u)^{\mathsf H} \bfa(u_b)|^2}{N^2} \IEEEnonumber\\
&& +\; (1-\lambda)(1-\gamma) \frac{|\bfa(u)^{\mathsf H} \bfv(\omega)|^2}{N} \;\in\; [0,1].
\end{IEEEeqnarray}
The three terms are the periodic power arriving at $u$ through the region-illumination beam $\bfw_{\mathcal R}$, the aperiodic codeword power, and the safeguard power. All three beams are fixed by the measurement-independent design, the beampattern does not change over the flight, and the sensing quality varies only because the target traverses different directions of this fixed beampattern. After receive beamforming toward the target and matched filtering of \eqref{eq:echo}, the echo SNR is given by \cite[eq.~(4)]{Stoica2007}, \cite[eq.~(2.11)]{Richards2014}
\begin{equation}\label{eq:snrecho}
\mathrm{SNR}_{s}[\ell] = \varsigma\, B(u_t[\ell]) \left( \frac{d_{\mathrm{ref}}}{d_{ut}[\ell]}\right)^{4},
\end{equation}
where the constant $\varsigma$ is the echo SNR at the reference range $d_{\mathrm{ref}}$ under full illumination.

In every TS, the UAV estimates the target range $r$, radial velocity $\dot r$, and azimuth $\varphi$ from the matched-filter output. The resulting sensing observation is
\begin{equation}\label{eq:meas}
\bfy_s[\ell] = h(\bfx[\ell]) + \bfn_s[\ell], \qquad \bfn_s[\ell] \sim \mathcal N(\mathbf 0, \bfR_s[\ell]),
\end{equation}
where $h(\bfx) = [\, r(\bfx),\ \dot r(\bfx),\ \varphi(\bfx) \,]^T$ maps the target state $\bfx$ to the measurement vector and $\bfn_s[\ell]$ is the estimation error. The error variances of the delay, Doppler, and angle estimators scale inversely with the echo SNR in \eqref{eq:snrecho} \cite{Kay1993, Richards2014}. When the echo SNR equals the reference value $\varsigma$, the range, radial-velocity, and azimuth errors have variances $\sigma_{r,0}^2$, $\sigma_{\dot r,0}^2$, and $\sigma_{\varphi,0}^2$. The covariance of the estimation error $\bfn_s[\ell]$ at any echo SNR then follows from the inverse scaling and is given by
\begin{equation}\label{eq:R}
\bfR_s[\ell] = \frac{\varsigma}{\mathrm{SNR}_{s}[\ell]}\, \mathrm{diag}\big(\sigma_{r,0}^2,\ \sigma_{\dot r,0}^2,\ \sigma_{\varphi,0}^2\big).
\end{equation}

\subsection{Threat Model and Covertness Metric}\label{sec:covertness}
Eve knows the protocol, the public codebooks including the dummy index set, the denoiser weights $\theta$, the tag sequence $\psi_\ell$, and the channel $\bfh_e$. Eve does not know the per-slot random draws, namely the transmitted index, the safeguard realization, and the noise. From the observation $\bfy_{e,\ell}$, Eve performs the per-slot binary hypothesis test
\begin{equation}\label{eq:evetest}
\mathcal H_0:\ a[\ell] = 0 \qquad \text{vs.} \qquad \mathcal H_1:\ a[\ell] = 1.
\end{equation}
Let $P^{(0)}_\psi$ and $P^{(1)}_\psi$ denote the distributions of $\bfy_{e,\ell}$ under the two hypotheses given the tag $\psi$; under \eqref{eq:evecsimodel} they also depend on the direction error $\delta_e$. A detector maps $\bfy_{e,\ell}$ to a guess $\hat a \in \{0, 1\}$ and is characterized by its false-alarm probability $\Pr[\hat a = 1 | \mathcal H_0]$ and missed-detection probability $\Pr[\hat a = 0 | \mathcal H_1]$.

We measure the distinguishability of the two hypotheses by the total variation distance, defined for distributions $P$ and $Q$ on a common space as
\begin{equation}\label{eq:tvdef}
\dtv(P, Q) := \sup_{\mathcal A} \big| P(\mathcal A) - Q(\mathcal A) \big|,
\end{equation}
where the supremum runs over all events $\mathcal A$. A detector that decides $\mathcal H_1$ on the set $\mathcal A := \{\mathbf y : \hat a(\mathbf y) = 1\}$ has false-alarm probability $P^{(0)}_\psi(\mathcal A)$ and missed-detection probability $1 - P^{(1)}_\psi(\mathcal A)$, and \eqref{eq:tvdef} yields
\begin{IEEEeqnarray}{rCl}\label{eq:tvbound}
\Pr[\hat a = 1 | \mathcal H_0] + \Pr[\hat a = 0 | \mathcal H_1] &=& 1 - \big( P^{(1)}_\psi(\mathcal A) - P^{(0)}_\psi(\mathcal A) \big) \IEEEnonumber\\
&\ge& 1 - \dtv\big(P^{(0)}_\psi, P^{(1)}_\psi\big).
\end{IEEEeqnarray}
For no detector the sum of the two error probabilities can fall below $1 - \dtv\big(P^{(0)}_\psi, P^{(1)}_\psi\big)$ \cite{Poor}.%
 We therefore impose the covertness constraint
\begin{equation}\label{eq:covconstraint}
\dtv\big(P^{(0)}_\psi, P^{(1)}_\psi\big) \le \epsilon_E, \qquad \forall \psi,\ |\delta_e| \le \Delta_e,
\end{equation}
for a covertness level $\epsilon_E \in [0, 1)$, required to hold for every tag and every admissible direction error. When the two hypotheses are equiprobable, \eqref{eq:tvbound} bounds Eve's probability of error as
\begin{equation}\label{eq:everror}
\Pr[\hat a \neq a] = \tfrac{1}{2} \Pr[\hat a = 1 | \mathcal H_0] + \tfrac{1}{2} \Pr[\hat a = 0 | \mathcal H_1] \ge \frac{1 - \epsilon_E}{2}.
\end{equation}
As $\epsilon_E \to 0$, no detector at Eve improves on a blind guess.

\subsection{Problem Formulation}\label{sec:problem}
 Let $R_a := \frac{1}{n}\log_2 M_a$ denote the per-slot aperiodic rate with $M_a$ non-dummy indices, and let $\epsilon_p, \epsilon_a$ denote the decoding error probabilities of the periodic and aperiodic streams at Bob, with reliability targets $\bar\epsilon_p$ and $\bar\epsilon_a$. The joint design of the generator $\theta$, the aperiodic codebook $\bfX_a(\cdot)$, the power shares $(\lambda, \gamma)$, and the safeguard beam split $\omega$ is
\begin{subequations}\label{eq:opt}
\begin{IEEEeqnarray}{rCl}
\max_{\theta,\, \bfX_a(\cdot),\, \lambda,\, \gamma,\, \omega} && \quad R_a \\
\text{s.t.} && \quad \dtv\big(P^{(0)}_{\psi}, P^{(1)}_{\psi}\big) \le \epsilon_E, \quad \forall \psi,\ |\delta_e| \le \Delta_e, \label{eq:optcov} \IEEEeqnarraynumspace\\
&& \quad \epsilon_p \le \bar\epsilon_p, \qquad \epsilon_a \le \bar\epsilon_a, \label{eq:optrel}\\
&& \quad \frac{1}{L} \sum_{\ell=1}^{L} \mathrm{PCRB}_{\mathrm{pos}}[\ell] \le \epsilon_s, \label{eq:optsense}\\
&& \quad \tfrac{1}{n}\mathbb E\big[\|\bfU_\ell\|_F^2\big] \le \P, \qquad \forall \ell. \label{eq:optpow}
\end{IEEEeqnarray}
\end{subequations}
Here $\mathrm{PCRB}_{\mathrm{pos}}[\ell]$ denotes the position PCRB of the tracking problem, a lower bound on the mean-square position error of any estimator, derived in Section~\ref{sec:pcrb}, and $\epsilon_s$ is the sensing budget, the largest tolerated position PCRB averaged over the $L$ TSs. The power constraint \eqref{eq:optpow} holds by construction. The sensing constraint \eqref{eq:optsense} depends on $(\lambda, \gamma, \omega)$ through \eqref{eq:snrecho}--\eqref{eq:R} and the PCRB recursion of Section~\ref{sec:pcrb}; the covertness constraint \eqref{eq:optcov} depends on the same variables through Eve's observation \eqref{eq:yeeff} and on $\theta$ through the safeguard distribution.  Increasing the beam split $\omega$ moves safeguard energy from Eve to the region illumination; the echo SNR \eqref{eq:snrecho} rises and the sensing constraint relaxes, while the masking term $\alpha_s(\omega)$ in \eqref{eq:yeeff} shrinks and the covertness constraint tightens. Decreasing $\omega$ has the reverse effect, and no choice of $\omega$ relaxes both constraints at once.

The solution operates in two phases. Offline, training returns the safeguard generator $\theta$, the aperiodic codebook, and Bob's decoder for the operating point $(M_a, \lambda, \gamma, \omega)$ (Section~\ref{sec:objective}); online, at each TS, the UAV updates the tracker and the trigger (Section~\ref{sec:sensing}), selects the index by \eqref{eq:mtilde}, draws the safeguard (Section~\ref{sec:ddim}), and transmits \eqref{eq:U}, while Bob decodes with the receivers of Section~\ref{sec:receivers}.

\section{Target Tracking and PCRB-Based Triggering}\label{sec:sensing}
At every TS, the sensing layer must decide whether the target is inside the protected region $\mathcal R$. Section~\ref{sec:ekf} models the target motion and presents the extended Kalman filter (EKF) that turns the noisy observations $\bfy_s[\ell]$ into an estimate of the target position and velocity. Section~\ref{sec:pcrb} derives the PCRB, a lower bound on the estimation error of any estimator, which serves as an estimator-independent measure of confidence in the estimate. Section~\ref{sec:trigger} combines the predicted position with the PCRB-based confidence measure into the trigger rule that sets $a[\ell]$.

\subsection{Target Motion Model and Extended Kalman Filtering}\label{sec:ekf}
Let $\bfx[\ell] := [x_t[\ell], \dot x_t[\ell], y_t[\ell], \dot y_t[\ell]]^T$ collect the target position and velocity. Since the slots are short, the target velocity is approximately constant from one slot to the next. We therefore adopt the near-constant-velocity model \cite{BarShalom2001}, given by
\begin{equation}\label{eq:dyn}
\bfx[\ell] = \bfPhi \bfx[\ell-1] + \bfn[\ell-1], \qquad \bfn \sim \mathcal N(\mathbf 0, \bfQ_n),
\end{equation}
where the transition matrix and the process-noise covariance are
\begin{equation}
\bfPhi = \begin{bmatrix} 1 & \delta & 0 & 0\\ 0&1&0&0\\ 0&0&1&\delta\\ 0&0&0&1 \end{bmatrix}\!, \quad
\bfQ_n = \mathrm{diag}\big(\sigma_{n_x}^2, \sigma_{n_{v_x}}^2, \sigma_{n_y}^2, \sigma_{n_{v_y}}^2\big).
\end{equation}
Here $\delta$ is the TS duration introduced in Section~\ref{sec:model}, and the zero-mean process noise $\bfn = [n_x, n_{v_x}, n_y, n_{v_y}]^T$ absorbs small deviations from a constant velocity, such as braking, accelerating, and turning. The measurement map $h(\bfx)$ of \eqref{eq:meas} expresses the observed range, radial velocity, and azimuth as functions of the state. The three components of $h$ are
\begin{subequations}\label{eq:hmap}
\begin{align}
r(\bfx) &= \sqrt{H^2 + x_t^2 + y_t^2}, \label{eq:hr}\\
\dot r(\bfx) &= \frac{x_t \dot x_t + y_t \dot y_t}{r(\bfx)}, \label{eq:hrdot}\\
\varphi(\bfx) &= \mathrm{atan2}(y_t, x_t), \label{eq:hphi}
\end{align}
\end{subequations}
where, at the true state, the UAV-target distance $r(\bfx)$ coincides with $d_{ut}[\ell]$ of Section~\ref{sec:echomodel}, $\dot r(\bfx)$ is observed through the Doppler shift, and $\varphi(\bfx)$ is obtained from the arrival direction of the echo.

The filtering problem defined by \eqref{eq:dyn} has linear dynamics but a nonlinear measurement map. The EKF \cite{BarShalom2001} handles the nonlinearity by linearizing $h$ around the current prediction and then applying the Kalman recursion. The linearization uses the Jacobian $\bfH = \partial h/\partial \bfx \in \mathbb R^{3\times 4}$, whose rows are
\begin{IEEEeqnarray}{rCl}
\frac{\partial r}{\partial \bfx} &=& \Big[ \frac{x_t}{r},\ 0,\ \frac{y_t}{r},\ 0 \Big], \IEEEnonumber\\
\frac{\partial \dot r}{\partial \bfx} &=& \Big[ \frac{\dot x_t}{r} - \frac{x_t\, \varrho}{r^3},\ \frac{x_t}{r},\ \frac{\dot y_t}{r} - \frac{y_t\, \varrho}{r^3},\ \frac{y_t}{r} \Big], \IEEEnonumber\\
\frac{\partial \varphi}{\partial \bfx} &=& \Big[ \frac{-y_t}{x_t^2 + y_t^2},\ 0,\ \frac{x_t}{x_t^2+y_t^2},\ 0 \Big],\label{eq:jac}
\end{IEEEeqnarray}
where $\varrho := x_t \dot x_t + y_t \dot y_t$.

In each TS, the EKF proceeds in two steps. The \emph{prediction} step propagates the previous estimate and its error covariance through the motion model and computes
\begin{equation}\label{eq:ekfpred}
\hat\bfx[\ell|\ell-1] = \bfPhi \hat\bfx[\ell-1], \qquad
\bfP[\ell|\ell-1] = \bfPhi \bfP[\ell-1] \bfPhi^T + \bfQ_n.
\end{equation}
The \emph{update} step corrects the prediction using the \emph{innovation} $\bfy_s[\ell] - h(\hat\bfx[\ell|\ell-1])$, the discrepancy between the observed and the predicted measurement, and computes
\begin{IEEEeqnarray}{rCl}
\bfK[\ell] &=& \bfP[\ell|\ell-1]\bfH^T[\ell] \big(\bfH[\ell]\bfP[\ell|\ell-1]\bfH^T[\ell] + \bfR_s[\ell]\big)^{-1}\!, \IEEEeqnarraynumspace\label{eq:ekfgain}\\
\hat\bfx[\ell] &=& \hat\bfx[\ell|\ell-1] + \bfK[\ell]\big( \bfy_s[\ell] - h(\hat\bfx[\ell|\ell-1]) \big), \label{eq:ekfup}\\
\bfP[\ell] &=& \big(\bfI - \bfK[\ell]\bfH[\ell]\big) \bfP[\ell|\ell-1]. \label{eq:ekfcov}
\end{IEEEeqnarray}
Here the \emph{Kalman gain} $\bfK[\ell]$ balances the prediction uncertainty $\bfP[\ell|\ell-1]$ against the measurement noise $\bfR_s[\ell]$, $\bfH[\ell]$ denotes the Jacobian \eqref{eq:jac} evaluated at the predicted state, and the azimuth component of the innovation is wrapped to $(-\pi,\pi]$. The trigger of Section~\ref{sec:trigger} operates on the predicted state $\hat\bfx[\ell|\ell-1]$, which provides the expected target position \emph{before} the slot is transmitted.

\subsection{Posterior Cram\'er-Rao Bound}\label{sec:pcrb}
The EKF covariance $\bfP[\ell]$ assesses the filter's accuracy through the linearized model, and any underestimation of the error carries over to the trigger. We therefore quantify the estimation accuracy by the PCRB \cite{Tichavsky1998}, which holds for every estimator and is independent of the filter in use.
For the filtering problem \eqref{eq:dyn}, the PCRB states that the error covariance of \emph{any} estimator $\hat\bfx[\ell]$ of the random state $\bfx[\ell]$ based on the observations $\bfy_s[1{:}\ell]$ satisfies
\begin{equation}\label{eq:pcrbdef}
\mathbb E\big[ (\hat\bfx[\ell]-\bfx[\ell])(\hat\bfx[\ell]-\bfx[\ell])^T \big] \succeq \bfJ[\ell]^{-1},
\end{equation}
where $\bfJ[\ell]$ is the Bayesian Fisher information matrix of the problem and $\succeq$ denotes the ordering of positive semidefinite matrices. By \cite{Tichavsky1998}, $\bfJ[\ell]$ follows a recursion over $\ell$, which for our linear dynamics with Gaussian noise takes the form
\begin{IEEEeqnarray}{rCl}
\bfJ[\ell|\ell-1] &=& \big( \bfQ_n + \bfPhi \bfJ[\ell-1]^{-1} \bfPhi^T \big)^{-1}, \label{eq:pcrbpred}\\
\bfJ[\ell] &=& \bfJ[\ell|\ell-1] + \mathbb E\big[ \bfH^T[\ell] \bfR_s[\ell]^{-1} \bfH[\ell] \big], \label{eq:pcrbup}
\end{IEEEeqnarray}
initialized at $\bfJ[0] = \bfP[0]^{-1}$. The prediction step \eqref{eq:pcrbpred} propagates the information through the motion model and can only decrease it, and the update step \eqref{eq:pcrbup} adds the information in the new observation.

The expectation in \eqref{eq:pcrbup} is taken over the random target trajectories through the state-dependent Jacobian $\bfH[\ell]$ and has no closed form. Following \cite{Tichavsky1998, Jiang2024}, we evaluate this term at the realized trajectory.%

Finally, we reduce the matrix bound to a scalar. The trigger requires only the \emph{position} accuracy, and the diagonal entries of $\bfJ[\ell]^{-1}$ bound the mean-square errors of the individual state components. Summing the two entries associated with $x_t$ and $y_t$ therefore yields the position bound
\begin{IEEEeqnarray}{rCl}\label{eq:pcrbpos}
\mathrm{PCRB}_{\mathrm{pos}}[\ell] &:=& \mathrm{tr}\big([\bfJ[\ell]^{-1}]_{\mathrm{pos}}\big) \IEEEnonumber\\
&\le& \mathbb E\big[ (\hat x_t[\ell]-x_t[\ell])^2 + (\hat y_t[\ell]-y_t[\ell])^2 \big],
\end{IEEEeqnarray}
where $[\cdot]_{\mathrm{pos}}$ denotes the $2\times 2$ submatrix built from the rows and columns of $x_t$ and $y_t$. The scalar $\mathrm{PCRB}_{\mathrm{pos}}[\ell]$ is the sensing metric of the setup. The power shares $(\lambda, \gamma)$ and the beam split $\omega$ shape the beampattern \eqref{eq:beampattern}, the beampattern sets the echo SNR \eqref{eq:snrecho}, and the echo SNR scales the measurement covariance $\bfR_s[\ell]$ that enters the information update \eqref{eq:pcrbup}. This chain is how the transmit design enters the sensing constraint \eqref{eq:optsense}.

\subsection{PCRB-Based Trigger}\label{sec:trigger}
The aperiodic traffic should be triggered when the target is inside $\mathcal R$. Since the true position is unknown, we base the trigger on the \emph{predicted belief}. At TS $\ell$, the target position is modeled as Gaussian with mean $\bfmu_\ell = [\hat x_t[\ell|\ell-1], \hat y_t[\ell|\ell-1]]^T$ and covariance $\bfSig_\ell = [\bfJ[\ell|\ell-1]^{-1}]_{\mathrm{pos}}$, i.e., the \emph{predicted PCRB}.
 The trigger probability is defined as
\begin{equation}\label{eq:ptrig}
p_{\mathrm{trigger}}[\ell] := \Pr\big[ \mathbf p \in \mathcal R \big] \quad \text{with } \mathbf p \sim \mathcal N(\bfmu_\ell, \bfSig_\ell).
\end{equation}
The trigger state is
\begin{equation}\label{eq:a}
a[\ell] := \indicator{p_{\mathrm{trigger}}[\ell] \ge \rho},
\end{equation}
where $\indicator{\cdot}$ denotes the indicator function and $\rho \in [0,1]$ is a design threshold. The aperiodic stream is thus activated when the belief assigns probability at least $\rho$ to the target being inside $\mathcal R$.

The probability \eqref{eq:ptrig} is the integral of a bivariate Gaussian distribution over an offset disc and has no closed form. The classical treatment expands this integral into infinite series and truncates them \cite{DiDonato1961}, and Monte Carlo integration converges only as the inverse square root of the number of samples and adds estimation randomness to the threshold decision \eqref{eq:a}. Proposition~\ref{prop:trigger} expresses the same probability as a single integral over the angle, in which the radial direction is integrated exactly and only the angular integral is evaluated numerically.

\begin{proposition}\label{prop:trigger}
Let $\mathbf p \sim \mathcal N(\bfmu, \bfSig)$ in $\mathbb R^2$ with $\bfSig$ positive definite, let $\mathcal D = \{\mathbf p : \|\mathbf p - \mathbf c\| \le R\}$ be a disc, and let $\bfSig = \mathbf L \mathbf L^T$ with $\mathbf L$ the Cholesky factor. Write $\mathbf m := \bfmu - \mathbf c$ and, for each angle $\vartheta \in [0, 2\pi)$, let $\boldsymbol\eta(\vartheta) := [\cos\vartheta, \sin\vartheta]^T$, and define
\begin{IEEEeqnarray}{rCl}\label{eq:quadcoef}
a(\vartheta) &:=& \|\mathbf L \boldsymbol\eta(\vartheta)\|^2,  \;
b(\vartheta) := 2\, \mathbf m^T \mathbf L \boldsymbol\eta(\vartheta), \;
c_0 := \|\mathbf m\|^2 - R^2. \notag \\
\end{IEEEeqnarray}
If $b(\vartheta)^2 \ge 4 a(\vartheta) c_0$, let $r_\pm(\vartheta) = \big(-b(\vartheta) \pm \sqrt{b(\vartheta)^2 - 4 a(\vartheta) c_0}\big)/(2 a(\vartheta))$ and set $r_{\mathrm{lo}}(\vartheta) = \max\{r_-(\vartheta), 0\}$ and $r_{\mathrm{hi}}(\vartheta) = \max\{r_+(\vartheta), 0\}$; otherwise set $r_{\mathrm{lo}}(\vartheta) = r_{\mathrm{hi}}(\vartheta) = 0$. Then
\begin{equation}\label{eq:proptrig}
\Pr[\mathbf p \in \mathcal D] = \frac{1}{2\pi} \int_0^{2\pi} \Big( e^{-r_{\mathrm{lo}}^2(\vartheta)/2} - e^{-r_{\mathrm{hi}}^2(\vartheta)/2} \Big)\, d\vartheta.
\end{equation}
\end{proposition}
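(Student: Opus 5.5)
The plan is to whiten the Gaussian and then integrate in polar coordinates centred at the whitened mean. Write $\mathbf p = \bfmu + \mathbf L \bfz$ with $\bfz \sim \mathcal N(\mathbf 0, \bfI_2)$; this is valid because $\bfSig = \mathbf L\mathbf L^T$ with $\mathbf L$ invertible. The event $\mathbf p \in \mathcal D$ becomes
\[
\|\mathbf m + \mathbf L\bfz\|^2 \le R^2 .
\]
Next pass to polar coordinates $\bfz = r\,\boldsymbol\eta(\vartheta)$ with $r \ge 0$ and $\vartheta \in [0,2\pi)$. The standard normal density, including the Jacobian $r$, factorises as
\[
\frac{1}{2\pi}\, r e^{-r^2/2}\, dr\, d\vartheta .
\]
Expanding the constraint along a fixed ray gives the quadratic $a(\vartheta) r^2 + b(\vartheta) r + c_0 \le 0$, with exactly the coefficients in \eqref{eq:quadcoef}. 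Hence
\[
\Pr[\mathbf p\in\mathcal D] = \frac{1}{2\pi}\int_0^{2\pi} \int_{\{r\ge 0:\ a r^2 + b r + c_0 \le 0\}} r e^{-r^2/2}\, dr\, d\vartheta ,
\]
by Fubini/Tonelli, since the integrand is nonnegative.

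The core step is to identify, for each $\vartheta$, the radial set $S(\vartheta) = \{r\ge0 : a r^2+br+c_0\le 0\}$. Since $a(\vartheta) = \|\mathbf L\boldsymbol\eta\|^2 > 0$, the quadratic is convex in $r$, so two cases arise.
\begin{itemize}
\item If the discriminant is negative, the quadratic is strictly positive and $S(\vartheta)$ is empty. This matches $r_{\mathrm{lo}}=r_{\mathrm{hi}}=0$.
\item If the discriminant is nonnegative, the solution set on $\mathbb R$ is $[r_-, r_+]$. Intersecting with $[0,\infty)$ gives $[\max\{r_-,0\}, \max\{r_+,0\}]$. This interval is correctly empty (both endpoints $0$) when $r_+<0$, and it equals $[0,r_+]$ when $r_-<0\le r_+$, which is the case where the whitened origin lies inside the ellipse, i.e. $c_0\le 0$.
\end{itemize}
I would spell out these subcases briefly, because the $\max$ convention is the only place where care is needed.

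Finally, integrate radially in closed form:
\[
\int_{r_{\mathrm{lo}}}^{r_{\mathrm{hi}}} r e^{-r^2/2}\,dr = e^{-r_{\mathrm{lo}}^2/2} - e^{-r_{\mathrm{hi}}^2/2},
\]
which also vanishes in the degenerate cases. Substituting this gives \eqref{eq:proptrig}.

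The main obstacle is bookkeeping rather than analysis. One must verify that $S(\vartheta)$ is always a single interval, which follows from convexity since $a>0$. One must also check that boundary and measure-zero events, such as tangency with zero discriminant or $r=0$, do not affect the probability.
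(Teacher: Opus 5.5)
Your proposal is correct and follows the paper's proof essentially step for step: you whiten with the Cholesky factor, pass to polar coordinates with uniform angle and radial density $r e^{-r^2/2}$, identify the radial set as $[r_-,r_+]\cap[0,\infty)$ using $a(\vartheta)>0$, and integrate $r e^{-r^2/2}$ in closed form. Your case bookkeeping is slightly more careful than the paper's, with one small caveat: when $r_+<0$ the interval $[\max\{r_-,0\},\max\{r_+,0\}]$ is the null set $\{0\}$ rather than empty, which is harmless, as you note.
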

\begin{IEEEproof}
First, we whiten the Gaussian. Writing $\mathbf p = \bfmu + \mathbf L \boldsymbol\xi$ with $\boldsymbol\xi \sim \mathcal N(\mathbf 0, \bfI_2)$, the event $\mathbf p \in \mathcal D$ becomes $\|\mathbf m + \mathbf L \boldsymbol\xi\| \le R$. Second, we pass to polar coordinates $\boldsymbol\xi = r \boldsymbol\eta(\vartheta)$ with $r \ge 0$ and $\vartheta \in [0, 2\pi)$. Under the standard two-dimensional Gaussian, the angle $\vartheta$ is uniform, the radius $r$ has density $r e^{-r^2/2}$, and the two are independent. The probability therefore becomes
\begin{equation}\label{eq:proofpolar}
\Pr[\mathbf p \in \mathcal D] = \frac{1}{2\pi} \int_0^{2\pi} \int_{\{r \ge 0 :\, q_\vartheta(r) \le 0\}} r e^{-r^2/2}\, dr\, d\vartheta,
\end{equation}
where $q_\vartheta(r) := \|\mathbf m + r \mathbf L\boldsymbol\eta(\vartheta)\|^2 - R^2 = a(\vartheta) r^2 + b(\vartheta) r + c_0$ with the coefficients \eqref{eq:quadcoef}. Third, we identify the inner integration set. Since $\bfSig$ is positive definite, $a(\vartheta) > 0$, and the quadratic $q_\vartheta$ is nonpositive on the interval $[r_-(\vartheta), r_+(\vartheta)]$ when its roots are real, and nowhere otherwise. Intersecting this interval with $r \ge 0$ gives $[r_{\mathrm{lo}}(\vartheta), r_{\mathrm{hi}}(\vartheta)]$ as defined in the statement; the empty cases are covered by $r_{\mathrm{lo}} = r_{\mathrm{hi}}$, for which the inner integral vanishes. Finally, the inner integral evaluates to
\begin{equation}
\int_{r_{\mathrm{lo}}}^{r_{\mathrm{hi}}} r e^{-r^2/2}\, dr = e^{-r_{\mathrm{lo}}^2/2} - e^{-r_{\mathrm{hi}}^2/2}.
\end{equation}
Substituting this expression into \eqref{eq:proofpolar} yields \eqref{eq:proptrig}.
\end{IEEEproof}
The integrand of \eqref{eq:proptrig} is a continuous $2\pi$-periodic function of $\vartheta$. The integral is therefore evaluated by trapezoidal quadrature with a few hundred nodes at negligible cost. Since the radial integral is evaluated exactly, the quadrature stays accurate even for small covariances $\bfSig_\ell$. The trigger itself is not a variable of problem \eqref{eq:opt}. It connects to the design through $a[\ell]$, which defines the hypotheses \eqref{eq:evetest} of the covertness constraint \eqref{eq:optcov}, and through $\bfSig_\ell$, which the sensing constraint \eqref{eq:optsense} bounds.

\section{Keyless Diffusion-Based Covert Transmission}\label{sec:scheme}
This section completes the security layer. Section~\ref{sec:ddim} constructs the diffusion sampler of the safeguard, and Section~\ref{sec:receivers} develops the receivers with which Bob decodes without shared randomness. The sampler provides the optimization variable $\theta$ of problem \eqref{eq:opt}, and the error probability of the receivers is the aperiodic reliability $\epsilon_a$ in \eqref{eq:optrel}.

\subsection{DDIM Safeguard Generator}\label{sec:ddim}
At every TS, the safeguard requires a generator that produces a new random waveform with a trainable distribution.
 A safeguard drawn from a fixed distribution, such as matched-power Gaussian noise, spreads its randomness uniformly over the signal space, and hiding the trigger then falls on the placement of the codewords alone. The diffusion model instead makes the safeguard distribution a trainable variable of the distribution-matching objective that covertness demands, and training it jointly with the codebook shapes the masking around the codeword geometry it must hide. Diffusion models turn Gaussian noise into a sample of a learned distribution through a sequence of neural denoising steps, with the denoiser parameters $\theta$ as the trainable degrees of freedom. We use the DDIM \cite{Song2021}, a variant of the denoising diffusion probabilistic model \cite{Ho2020} whose sampler needs only a few steps.

A diffusion model is specified through a \emph{forward} process that turns a clean waveform $\bfw_0 \in \mathbb C^{1\times n}$ into noise. With $\bfw_t$ denoting the chain state at diffusion step $t \in \{0, 1, \ldots, T\}$, the forward process is defined by its Gaussian marginals
\begin{equation}\label{eq:ddimforward}
q(\bfw_t | \bfw_0) = \mathcal N_{\mathbb C}\big(\sqrt{\alpha_t}\, \bfw_0,\ (1-\alpha_t)\, \bfI\big).
\end{equation}
The decreasing sequence $\alpha_t \in (0,1]$ sets the proportion of waveform and noise at step $t$. With $\alpha_0 \approx 1$ the chain starts at the clean waveform, and with $\alpha_T \approx 0$ the chain ends in pure noise.
Generation runs this corruption backward. The only learned component is a neural network $\bfeps_\theta(\bfw_t, t \,|\, \psi)$ that estimates the noise contained in the state $\bfw_t$, conditioned on the step index and the public tag. Writing \eqref{eq:ddimforward} as $\bfw_t = \sqrt{\alpha_t}\, \bfw_0 + \sqrt{1-\alpha_t}\, \bfeps$ and replacing the standard Gaussian $\bfeps$ by its estimate gives the network's estimate of the clean waveform,
\begin{equation}\label{eq:w0hat}
\hat \bfw_0 = \frac{\bfw_t - \sqrt{1-\alpha_t}\; \bfeps_\theta(\bfw_t, t \,|\, \psi)}{\sqrt{\alpha_t}}.
\end{equation}
The DDIM reverse update is
\begin{equation}\label{eq:ddimstep}
\bfw_{t-1} = \sqrt{\alpha_{t-1}}\, \hat \bfw_0 + \sqrt{1 - \alpha_{t-1} - \sigma_t^2}\; \bfeps_\theta(\bfw_t, t \,|\, \psi) + \sigma_t \bfxi_t,
\end{equation}
where $\bfxi_t \sim \mathcal N_{\mathbb C}(\mathbf 0, \bfI)$ is independent noise. The first term places the clean estimate at the signal scale of step $t-1$, the second scales the estimated noise such that the state matches the forward marginal \eqref{eq:ddimforward} at step $t-1$, and the third injects new randomness with standard deviation
\begin{equation}\label{eq:eta}
\sigma_t = \eta \sqrt{\frac{1-\alpha_{t-1}}{1-\alpha_t}} \sqrt{1 - \frac{\alpha_t}{\alpha_{t-1}}}, \qquad \eta \ge 0,
\end{equation}
controlled by the stochasticity parameter $\eta$. With $\eta = 0$ the update is deterministic given the current state, and $\eta = 1$ recovers the sampling of the original diffusion model \cite{Ho2020}. The update \eqref{eq:ddimstep} remains valid between non-consecutive steps, and the reverse process may therefore jump along a grid
of only $K \ll T$ points \cite{Song2021}.

The safeguard of TS $\ell$ is obtained as follows. The UAV draws an initial state $\bfw_{t_K} \sim \mathcal N_{\mathbb C}(\mathbf 0, \bfI)$ and runs the $K$ reverse steps \eqref{eq:ddimstep} along the grid, conditioned on the tag $\psi_\ell$. After each step the state is rescaled to unit average power, a final projector enforces the output power exactly, and $\bfW_{\theta,\ell}$ is set to the final state $\bfw_{t_0}$. The per-slot cost is $K$ network evaluations.

\subsection{Seed-Free Receivers at Bob}\label{sec:receivers}
The aperiodic message carries a time-critical alert and must be decoded within its own TS. Bob therefore applies successive interference cancellation in every TS. Bob first decodes the periodic message while the aperiodic stream is treated as additional noise, cancels the periodic component at the known gain $\beta_p$, and then decodes the aperiodic message from the remaining signal. The periodic stage sees the per-symbol signal power $\lambda\P|\beta_p|^2$ against the interference-plus-noise power $\sigma_v^2 := (1-\lambda)\gamma\P + \sigma_b^2/(N|g_b|^2)$ and operates at the ratio $\mathrm{SINR}_p := \lambda\P|\beta_p|^2/\sigma_v^2$. In the single-antenna setting, where the safeguard is not nulled at Bob, its power $(1-\lambda)(1-\gamma)\P$ adds to $\sigma_v^2$. The periodic stream uses a standard short-blocklength code at the reliability target $\bar\epsilon_p$, and the normal approximation of the maximal coding rate over the complex Gaussian channel \cite[Th.~54, eq.~(294)]{Polyanskiy2010} sets the periodic rate to
\begin{equation}\label{eq:pacq}
R_p \approx \log_2\!\big(1 + \mathrm{SINR}_p\big) - \sqrt{V_p/n}\; Q^{-1}(\bar\epsilon_p) + \frac{\log_2 n}{2n},
\end{equation}
where $V_p := \frac{\mathrm{SINR}_p(\mathrm{SINR}_p + 2)}{(1+\mathrm{SINR}_p)^2}\, \log_2^2 e$ is the dispersion of the effective channel\footnote{An information-theoretic treatment of this superposition based on dirty-paper coding is given in \cite{Nikbakht2025}.}. When the periodic stage decodes correctly, cancellation of the periodic component and equalization by $(g_b^* \sqrt N)$ reduce \eqref{eq:ybeff} to
\begin{equation}\label{eq:residual}
\tilde \bfy_{b,\ell} \approx \sqrt{c_1}\, \bfX_{a,\ell} + \tilde \bfz_{b,\ell},
\end{equation}
with the aperiodic power $c_1 := (1-\lambda)\gamma\P$, $\tilde \bfz_{b,\ell} \sim \mathcal N_{\mathbb C}(0, \tilde\sigma_b^2 \bfI)$, and $\tilde \sigma_b^2 = \sigma_b^2/(N |g_b|^2)$. With exact zero-forcing the safeguard does not reach Bob and \eqref{eq:residual} is a standard Gaussian channel. Imperfect nulling, or the single-antenna setting, adds the safeguard term $\sqrt{c_2}\, \bfW_{\theta,\ell}$ to \eqref{eq:residual}, where $c_2 := (1-\lambda)(1-\gamma)\P$. Bob shares no seed with the UAV, cannot regenerate $\bfW_{\theta,\ell}$, and can only exploit its \emph{distribution}. The maximum-likelihood (ML) rule marginalizes over this distribution and is given by
\begin{equation}\label{eq:mlexact}
\hat m_{a,\ell} = \arg\max_m\ \mathbb E_{\bfW}\Big[ p_{\tilde \bfz}\big( \tilde \bfy_{b,\ell} - \sqrt{c_1}\, \bfX_a(m) - \sqrt{c_2}\, \bfW \big) \Big],
\end{equation}
where $p_{\tilde\bfz}$ is the Gaussian density of the residual noise. This optimum is intractable, since the diffusion output distribution has no closed-form density and the expectation requires averaging many safeguard samples per candidate and per slot. The two rules below approximate \eqref{eq:mlexact} from different directions.
With multiple transmit antennas, where the safeguard is nulled toward Bob, $c_2$ vanishes at Bob and the first rule coincides with \eqref{eq:mlexact}. The comparison therefore matters for the single-transmit-antenna and imperfect-nulling cases.

\subsubsection{Isotropic marginalized ML} Approximating the safeguard by white Gaussian noise of the same power reduces \eqref{eq:mlexact} to the classical minimum-distance rule \cite{Poor}
\begin{equation}\label{eq:mliso}
\hat m_{a,\ell} = \arg\min_m \big\| \tilde \bfy_{b,\ell} - \sqrt{c_1}\, \bfX_a(m) \big\|^2 .
\end{equation}
This rule needs no training, costs one correlation per candidate, and is optimal when $c_2 = 0$.

\subsubsection{Trained denoiser decoder} The second rule uses a denoiser that Bob trains for decoding, following diffusion-based solvers for inverse problems \cite{Chung2023}. The generator's denoiser $\bfeps_\theta$ is trained only through the sampler and is not trained to remove noise from a finished safeguard. Bob, however, knows $\theta$ and the tag and can therefore draw any number of safeguard samples offline. With these samples, Bob trains a network $\bfeps_\nu(\bfr, s)$ by denoising score matching \cite{Vincent2011},
\begin{equation}\label{eq:dsm}
\min_\nu\ \mathbb E_{\bfW, s, \bfxi}\Big[ \big\| \bfeps_\nu\big(\bfW + s\,\bfxi,\, s\big) - \bfxi \big\|^2 \Big],
\end{equation}
where $\bfW$ is a public safeguard sample, $\bfxi \sim \mathcal N_{\mathbb C}(\mathbf 0, \bfI)$, and the noise level $s$ is drawn log-uniformly from $[s_b/2,\, 4 s_b]$ with $s_b := \tilde\sigma_b/\sqrt{c_2}$. For each candidate $m$, Bob forms the normalized residual $\bfr_m = (\tilde \bfy_{b,\ell} - \sqrt{c_1} \bfX_a(m))/\sqrt{c_2}$, which equals $\bfW_{\theta,\ell} + s_b \bfxi$ for the transmitted candidate. By Tweedie's formula \cite{Efron2011}, the trained network yields the posterior mean of the safeguard, $\hat\bfW_m = \bfr_m - s_b\, \bfeps_\nu(\bfr_m, s_b)$. The rule decides for the candidate with the smallest reconstruction error,
\begin{equation}\label{eq:dendec}
\hat m_{a,\ell} = \arg\min_m \big\| \bfr_m - \hat\bfW_m \big\|^2 = \arg\min_m\ s_b^2 \big\| \bfeps_\nu(\bfr_m, s_b) \big\|^2.
\end{equation}
The reconstruction error is small only when $\bfr_m$ lies close to the safeguard distribution, which is the case for the transmitted candidate. The network is trained once per operating point before deployment.

\section{End-to-End Training}\label{sec:objective}\label{sec:training}
In this section, we present the training procedure that solves the design problem \eqref{eq:opt} of Section~\ref{sec:problem}. Problem \eqref{eq:opt} cannot be solved with standard constrained optimization for two reasons. First, the covertness and reliability constraints depend on the sampler weights $\theta$ only through the distribution of the safeguard that the network generates. This distribution has no closed-form expression, and the constraints can therefore only be estimated by drawing samples from the network. Second, the objective is the rate of an explicit finite codebook and changes only with the integer $M_a$.
We therefore hold $(M_a, \lambda, \gamma, \omega)$ fixed during training and sweep the four parameters over a grid across training runs, which traces the trade-offs of Section~\ref{sec:numerical}.
For each setting, the continuous variables $(\theta, \zeta, \bfX_a)$ are trained by turning the constraints into penalty terms of a single loss \cite{Boyd2004}. Three constraints, \eqref{eq:optpow}, \eqref{eq:optsense}, and the periodic part of \eqref{eq:optrel}, need no penalty. The power constraint \eqref{eq:optpow} is enforced exactly by the projectors of Section~\ref{sec:ddim}. The sensing constraint \eqref{eq:optsense} depends only on $(\lambda, \gamma, \omega)$. The periodic part of \eqref{eq:optrel} holds by construction, since for each $(\lambda, \gamma)$ the periodic rate is set by \eqref{eq:pacq} at the target $\bar\epsilon_p$. Differentiable penalties are needed for the covertness constraint \eqref{eq:optcov} and the aperiodic part of \eqref{eq:optrel}.

For covertness, the total variation distance itself cannot serve as a training signal. Evaluating the supremum in \eqref{eq:tvdef} requires the densities of Eve's observations under the two hypotheses, and the network that generates the safeguard provides samples of these distributions but not their densities. Training requires a loss computed from sample batches and differentiated with respect to $\theta$, and we use two complementary losses of this kind. The first is the squared maximum mean discrepancy (MMD) \cite{Gretton2012}, defined for a kernel $k$ as
\begin{IEEEeqnarray}{rCl}\label{eq:mmd}
\mathrm{MMD}^2\big(P^{(0)}_\psi, P^{(1)}_\psi\big) &=& \mathbb E[k(y,y')] + \mathbb E[k(w,w')] \IEEEnonumber\\
&&\; -\, 2\mathbb E[k(y,w)],
\end{IEEEeqnarray}
with $y, y' \sim P^{(0)}_\psi$ and $w, w' \sim P^{(1)}_\psi$ independent. The three expectations in \eqref{eq:mmd} can be estimated from two batches of generated samples, and the resulting estimate is differentiable with respect to $\theta$.
 The second loss is \emph{adversarial} \cite{Goodfellow2014}. A classifier $C_\phi$ is trained to detect the trigger state from Eve's observations, and the generator is trained to make the classifier posterior uniform. Since \eqref{eq:tvbound} holds with equality for the likelihood-ratio detector, the best detector's accuracy equals $\frac{1}{2}(1 + \dtv)$, and lowering the accuracy of a strong classifier lowers an empirical estimate of $\dtv$.%

For the aperiodic stream, the error event behind $\epsilon_a$ is a binary indicator and gives no gradient. We therefore replace $\epsilon_a$ during training by the cross-entropy (CE) of a decoder network $D_\zeta(\tilde \bfy_{b,\ell}, \psi_\ell)$, trained jointly with the safeguard generator to output the posterior probabilities of the candidate messages as in learned demappers \cite{OShea2017}. This differentiable loss equals the negative log-probability that the network assigns to the transmitted index, and lowering the loss lowers $\epsilon_a$. With all penalty terms defined, each training iteration alternates the classifier step
\begin{equation}\label{eq:advstep}
\min_\phi\; \mathrm{CE}\big( C_\phi(\bfy_{e,\ell}), a[\ell] \big)
\end{equation}
with the generator step
\begin{IEEEeqnarray}{rCl}\label{eq:loss}
\min_{\theta, \zeta, \bfX_a}&& \lambda_{\mathrm{priv}}\, \mathrm{MMD}^2\big(\{\bfy_{e,\ell} | a{=}0\}, \{\bfy_{e,\ell} | a{=}1\}\big) \IEEEnonumber\\
&& +\, \lambda_{\mathrm{adv}}\, \mathbb E\Big[ \textstyle\sum_{a'} q_\phi(a' | \bfy_{e,\ell}) \log q_\phi(a' | \bfy_{e,\ell}) \Big] \IEEEnonumber\\
&& +\, \lambda_{\mathrm{comm}}\, \mathrm{CE}\big( D_\zeta(\tilde \bfy_{b,\ell}, \psi_\ell), \tilde m_{a,\ell} \big) + \lambda_{\mathrm{rf}} \mathcal L_{\mathrm{RF}} \IEEEnonumber\\
&& +\, \lambda_{\mathrm{mean}}\, \mathcal L_{\mathrm{mean}} + \lambda_{\mathrm{spr}}\, \mathcal L_{\mathrm{spr}},\IEEEeqnarraynumspace
\end{IEEEeqnarray}
where $q_\phi(\cdot | \bfy_{e,\ell})$ is the classifier posterior over the trigger state. The second term is smallest when this posterior is uniform and the classifier learns nothing from Eve's observation. The term $\mathcal L_{\mathrm{RF}}$, weighted by $\lambda_{\mathrm{rf}} > 0$, penalizes safeguard samples whose power deviates from the budget. The projectors of Section~\ref{sec:ddim} enforce the budget exactly, and this penalty keeps the network output close to its projection and thereby stabilizes training. The first two terms penalize the covertness constraint \eqref{eq:optcov} and the third the aperiodic part of the reliability constraint \eqref{eq:optrel}. The weights $\lambda_{\mathrm{priv}}, \lambda_{\mathrm{adv}}, \lambda_{\mathrm{comm}} > 0$ balance the covertness and reliability penalties and select the operating point of the design.

The last two terms keep the safeguard random.  For a batch of $N_s$ safeguard samples $\bfW_1, \ldots, \bfW_{N_s}$, the first term $\mathcal L_{\mathrm{mean}}$ is the energy share of the batch mean and is defined by
\begin{equation}\label{eq:lmean}
\mathcal L_{\mathrm{mean}} := \frac{\big\| \frac{1}{N_s}\sum_{i=1}^{N_s} \bfW_i \big\|^2}{\frac{1}{N_s}\sum_{i=1}^{N_s} \| \bfW_i \|^2},
\end{equation}
which reaches zero when no common waveform dominates the samples. The second term $\mathcal L_{\mathrm{spr}} $ rewards the spread of the samples through the nearest-neighbor distance and is given by
\begin{equation}\label{eq:lspr}
\mathcal L_{\mathrm{spr}} := -\frac{1}{N_s} \sum_{i=1}^{N_s} \log \min_{j \neq i} \| \bfW_i - \bfW_j \|^2,
\end{equation}
which up to a positive scale and an additive constant is the negative of the nearest-neighbor estimate of the safeguard entropy \cite{Kozachenko1987}. The weights $\lambda_{\mathrm{mean}}, \lambda_{\mathrm{spr}} > 0$ set the strength of the two terms.

The aperiodic codebook $\bfX_a(\cdot)$ is optimized jointly with the sampler and remains public, and projection after every update keeps its codewords power-equalized. Each training batch samples tags, channels, and trigger states, and applies classifier-free conditioning on the tag \cite{HoSalimans2022}. Every safeguard sample is produced by $K$ DDIM steps, the loss gradient passes backward through all $K$ steps, and exponential moving average weights with gradient clipping stabilize this differentiation \cite{Ho2020}. By default, training evaluates Eve's terms at the nominal direction $\hat u_e$. A robust variant instead draws $\delta_e$ uniformly from $[-\Delta_e, \Delta_e]$ at every training slot, which enforces the covertness terms in \eqref{eq:loss} on average over the uncertainty set.

The training problem is nonconvex, and training returns a stationary point of the loss \eqref{eq:loss} rather than a global optimum of problem \eqref{eq:opt}. Therefore, we establish the feasibility of the trained design after training. The power constraint \eqref{eq:optpow} holds by construction. The sensing constraint \eqref{eq:optsense} is evaluated directly through the PCRB recursion, and the reliability constraint \eqref{eq:optrel} is measured as Bob's error rate. The loss enforces the covertness constraint \eqref{eq:optcov} through the MMD and classifier terms rather than through the total variation distance itself. We therefore verify covertness with new detectors that are trained independently of the classifier of \eqref{eq:advstep}. By the equality form of \eqref{eq:tvbound}, a detector accuracy $\beta_{\mathrm{det}}$ implies $\dtv \ge 2\beta_{\mathrm{det}} - 1$, and a detector can expose leakage but never prove its absence. If the MMD reaches zero, the two distributions coincide and \eqref{eq:optcov} holds with $\epsilon_E = 0$.

\section{Numerical Results}\label{sec:numerical}
We evaluate the proposed system with the following parameters, which remain fixed throughout. The UAV hovers at $H = 100$~m above the origin with an $N = 8$ element ULA; Bob is at $(40, -30)$~m, and Eve at $(55, -38)$~m.
The target starts at $(100, 20)$~m with initial velocity $(\dot x_t, \dot y_t) = (-5, -3.5)$~m/s and crosses the protected region $\mathcal R$ of radius $R_0 = 40$~m centered at Bob. The flight has $L = 200$ TSs of $\delta = 0.1$~s, with $\sigma_{n_x} = \sigma_{n_y} = 0.1$~m, $\sigma_{n_{v_x}} = \sigma_{n_{v_y}} = 0.05$~m/s, $\sigma_{r,0} = 0.5$~m, $\sigma_{\dot r, 0} = 0.3$~m/s, and $\sigma_{\varphi,0} = 1^\circ$ at $\varsigma = 30$~dB and $d_{\mathrm{ref}} = 150$~m in \eqref{eq:R}, and $\rho = 0.9$. Each TS carries $n = 64$ symbols, and the power budget is normalized to $\P = 1$, with all SNRs defined relative to the normalization. The region beam $\bfw_{\mathcal R}$ is synthesized over the region's direction-cosine sector $[0, 0.61]$ and keeps a near-uniform power gain between $0.22$ and $0.39$ across the sector, with $|\beta_p|^2 = 0.34$ at Bob and $|\alpha_p|^2 = 0.38$ at Eve. The reliability target for the periodic stream is $\bar\epsilon_p = 10^{-4}$. The periodic stage operates at $\mathrm{SINR}_p = 0.68$, where \eqref{eq:pacq} gives $R_p \approx 0.26$ bit per symbol, and the periodic stream carries $16$ information bits per TS. The computed rate is achievable in practice, since cyclic-redundancy-check-aided short-blocklength codes operate within a fraction of a decibel of \eqref{eq:pacq} at blocklengths of this order \cite{Yang2022}. The denoiser $\bfeps_\theta$ is a three-block one-dimensional convolutional network with feature-wise linear modulation (FiLM) tag conditioning \cite{Perez2018}. Sampling uses $K = 6$ DDIM steps, and training minimizes the loss \eqref{eq:loss} with a Gaussian-kernel mixture for the MMD. Covertness is measured by the accuracy of a post-hoc neural detector at Eve with $12{,}000$ training and $8{,}000$ test TSs for \eqref{eq:evetest}. An accuracy of $0.5$ is a blind guess, and the same detector budget applies to every scheme and figure. We call this the \emph{learned Eve}. We also consider a \emph{model-aware Eve}, which knows the codebook, the power shares, the beam gains and the public model and applies the likelihood-ratio test for \eqref{eq:evetest}, with the safeguard likelihood estimated from $40{,}000$ public safeguard samples and the accuracy measured on $4{,}000$ test TSs. Both accuracies are lower bounds on $(1 + \dtv)/2$ in \eqref{eq:everror}, and the model-aware Eve is the tighter one. Reliability is Bob's aperiodic block error rate (BLER) over the triggered TSs under the trained denoiser decoder \eqref{eq:dendec}. Bob's network $\bfeps_\nu$ has the architecture of $\bfeps_\theta$ with four blocks of twice the width, conditioned on the noise level, and is trained on $60{,}000$ public safeguard samples for $4{,}000$ iterations. The randomness weights are $\lambda_{\mathrm{mean}} = 4$ and $\lambda_{\mathrm{spr}} = 0.05$.
The design variables default to $\lambda = 0.6$, $\gamma = 0.5$, $\omega = 0.5$, $M_a = 4$, $\eta = 1$, and Bob and Eve SNRs of $10$ and $5$~dB.

\subsection{Tracking, PCRB, and Triggering}\label{sec:numtracking}
Fig.~\ref{fig:trackingJ} verifies the tracking and the trigger on one flight realization. In Fig.~\ref{fig:trackingJ}(a) the EKF estimate lies on top of the true trajectory, with a position root mean square error (RMSE) of $0.69$~m over the flight. In Fig.~\ref{fig:trackingJ}(b) the trigger state coincides with the gray shading over the full flight. The exceptions are the two boundary crossings, where the trigger turns on two TSs after entry and off one TS before exit, since the predicted belief then lies partly inside and partly outside $\mathcal R$. The target is inside $\mathcal R$ for $123$ TSs. The trigger is active in $120$ of the $123$ in-region TSs and never activates outside.

\begin{figure}[t]
\centering
\subfloat[Geometry and estimated trajectory.]{\includegraphics[width=0.49\columnwidth]{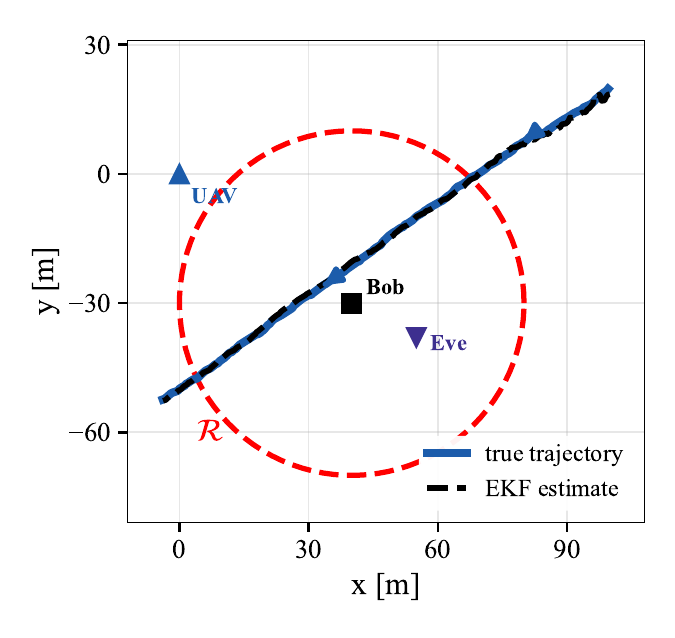}}\hfill
\subfloat[Trigger probability vs. time.]{\includegraphics[width=0.49\columnwidth]{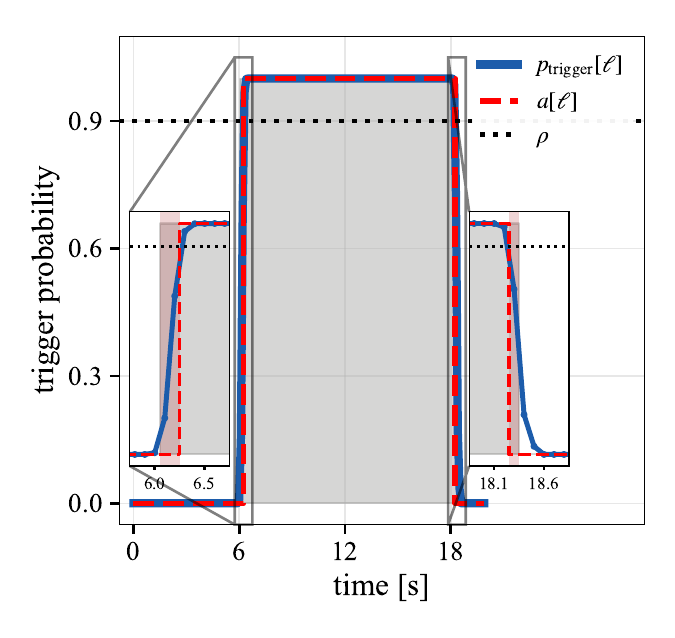}}
\caption{Tracking and triggering ($N = 8$). In (a), the arrows indicate the direction of target motion. In (b), the gray shading marks the TSs with the target truly inside $\mathcal R$, and the insets magnify the two boundary crossings, with red bands at the three TSs in which the trigger differs from the ground truth.}
\label{fig:trackingJ}
\end{figure}

\begin{figure}[t]
\centering
\subfloat[Geometry; Eve$_1$ inside the sector, Eve$_2$ outside.]{\includegraphics[width=0.7\columnwidth]{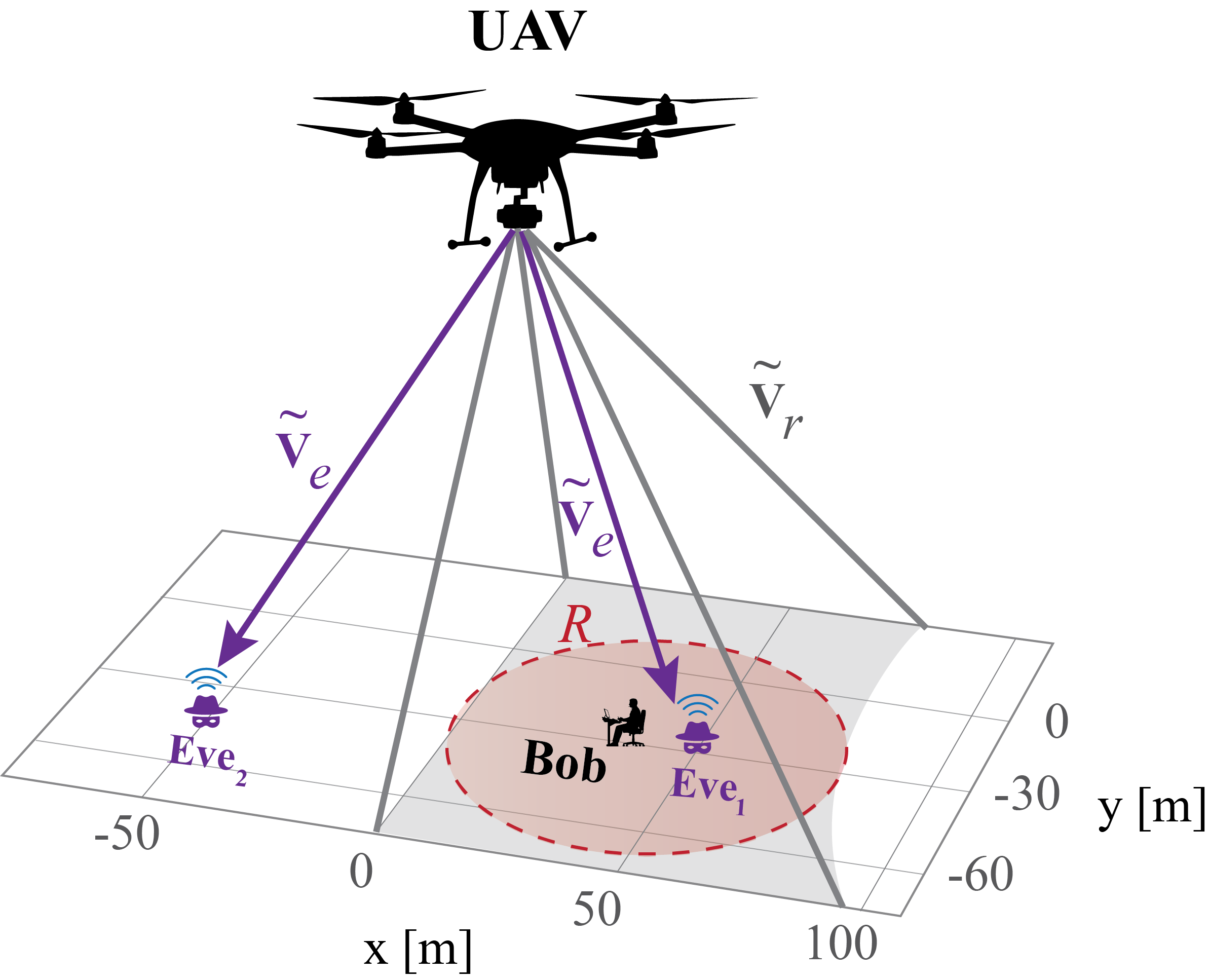}}\\
\subfloat[Eve accuracy vs. $\omega$ (Eve$_1$).]{\includegraphics[width=0.49\columnwidth]{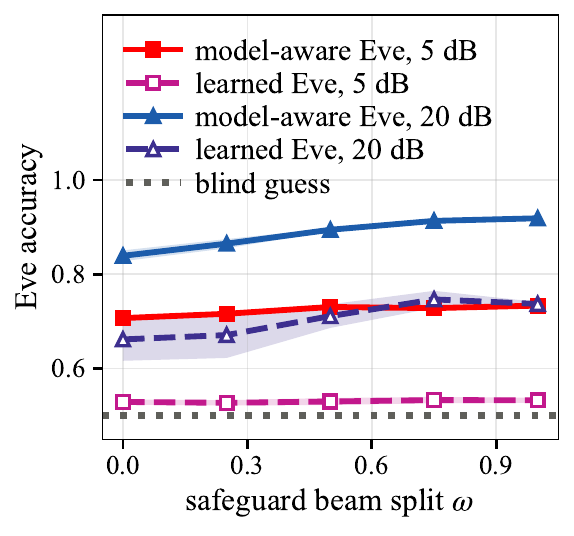}}\hfill
\subfloat[Position PCRB vs. $\omega$ (Eve$_1$).]{\includegraphics[width=0.49\columnwidth]{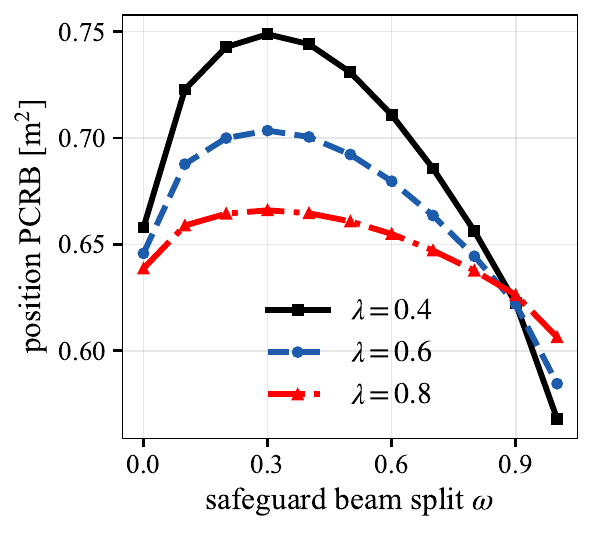}}\\
\subfloat[Eve accuracy vs. $\omega$ (Eve$_2$).]{\includegraphics[width=0.49\columnwidth]{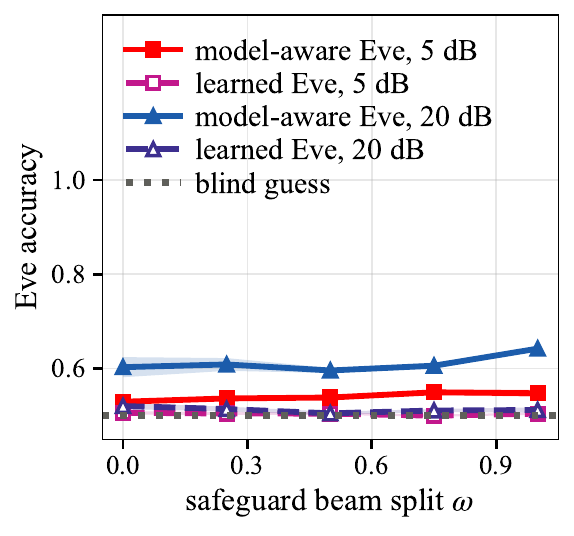}}\hfill
\subfloat[Position PCRB vs. $\omega$ (Eve$_2$).]{\includegraphics[width=0.49\columnwidth]{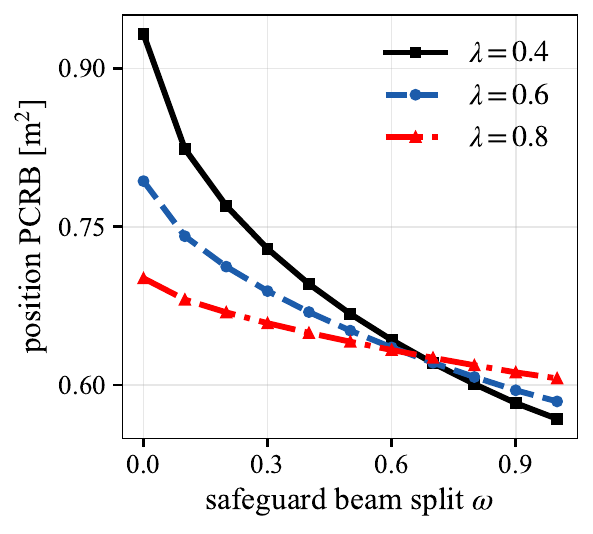}}
\caption{Covertness-sensing trade-off ($N = 8$), traced by the safeguard beam split $\omega$. In (a), the gray cone and its shaded ground footprint mark the directions illuminated by $\tilde\bfv_r$, and the arrows mark $\tilde\bfv_e$ for Eve$_1$ at $(55, -38)$~m and Eve$_2$ at $(-55, -38)$~m. In (b) and (d), every point retrains the generator for $5000$ iterations with Bob's BLER held at $10^{-3}$, and at Eve SNRs of $5$ and $20$~dB. In (c) and (e), the PCRB is shown for $\lambda \in \{0.4, 0.6, 0.8\}$ and depends on Eve's direction through the safeguard beam but not on Eve's SNR.}
\label{fig:tradeoff}
\end{figure}
\subsection{Covertness-Sensing Trade-off}\label{sec:numtradeoff}
Fig.~\ref{fig:tradeoff} traces the covertness-sensing trade-off at $N = 8$ as the safeguard beam split $\omega$ sweeps from full Eve-masking ($\omega = 0$) to full region illumination ($\omega = 1$), retraining the generator at every point with Bob's BLER held at $10^{-3}$. Fig.~\ref{fig:tradeoff}(a) shows the two Eve locations. Eve$_1$ lies inside $\mathcal R$ and inside the main lobe of Bob's beam and models a concealed receiver near the protected asset, whereas Eve$_2$ lies outside both.
For Eve$_1$, both detectors in Fig.~\ref{fig:tradeoff}(b) gain with Eve's SNR at every beam split. The learned Eve stays near $0.53$ at $5$~dB and ranges from $0.66$ to $0.75$ at $20$~dB. The model-aware Eve lies above it, between $0.71$ and $0.73$ at $5$~dB and between $0.84$ and $0.92$ at $20$~dB. The gap between the two shows that the learned Eve under-reports the leakage toward an eavesdropper inside Bob's main lobe. At both SNRs the model-aware accuracy increases with $\omega$, since a larger $\omega$ moves safeguard energy from Eve$_1$ to the region illumination and the masking gain at Eve$_1$ falls from $0.65$ to $0.21$. Full region illumination is therefore not the covertness-optimal choice against Eve$_1$. Since the PCRB in Fig.~\ref{fig:tradeoff}(c) is smallest at $\omega = 1$ and varies between $0.59$ and $0.70$~m$^2$ at $\lambda = 0.6$, covertness favors small $\omega$ while sensing favors large $\omega$, and the beam split selects the operating point on the trade-off.
For the out-of-sector Eve$_2$, the learned Eve in Fig.~\ref{fig:tradeoff}(d) stays within $0.02$ of the blind-guess level at every Eve SNR and every $\omega$, and the model-aware Eve stays between $0.53$ and $0.55$ at $5$~dB and between $0.60$ and $0.64$ at $20$~dB.
 Eve's direction thus determines the leakage. The beam split should therefore favor the region whenever the eavesdropper direction carries weak leakage. Masking Eve$_2$ only costs sensing, and at $\lambda = 0.6$ the PCRB in Fig.~\ref{fig:tradeoff}(e) grows from $0.59$~m$^2$ at $\omega = 1$ to $0.79$~m$^2$ at $\omega = 0$.%

\subsection{Covertness-Reliability Pareto Frontier}\label{sec:numpareto}
Fig.~\ref{fig:pareto} quantifies the reliability cost of covertness for $N = 1$ where the safeguard that masks the transmission at Eve also reaches Bob. We sweep the adversarial weight $\lambda_{\mathrm{adv}}$ of the loss \eqref{eq:loss} over six values in $[0.05, 5]$, each yielding one trained system averaged over ten seeds.
As $\lambda_{\mathrm{adv}}$ grows, the learned Eve in Fig.~\ref{fig:pareto}(a) falls from $0.96$ to $0.53$ and the model-aware Eve from $0.98$ to $0.66$, while Bob's BLER in Fig.~\ref{fig:pareto}(b) rises. The weight $0.05$ keeps Bob at the BLER floor but leaves both Eves above $0.96$.
As can be seen from this figure, the denoiser decoder outperforms the isotropic ML. Also, beyond $\lambda_{\mathrm{adv}} = 2$ the model-aware accuracy improves by less than $0.03$ while the BLER more than doubles, and large weights therefore degrade reliability with negligible gain in covertness.  Fig.~\ref{fig:pareto}(a) also shows the  \emph{retrained Gaussian} which replaces the DDIM by a matched-power Gaussian safeguard at $\lambda_{\mathrm{adv}} = 1$.
Under the retrained Gaussian, Bob's BLER is $1.2\times 10^{-2}$, close to the $1.1\times 10^{-2}$ of the proposed scheme at $\lambda_{\mathrm{adv}} = 5$, yet the model-aware Eve reaches $0.80$ against $0.66$ and the learned Eve $0.74$ against $0.53$. Without a trainable safeguard distribution the covertness objective can shape only the codebook, which yields far less covertness for the same reliability.

\begin{figure}[t]
\centering
\subfloat[Eve accuracy vs. Bob BLER.]{\includegraphics[width=0.49\columnwidth]{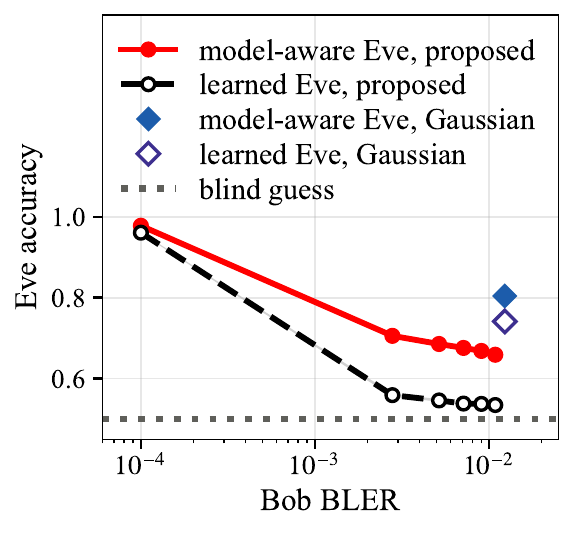}}\hfill
\subfloat[Bob BLER vs. $\lambda_{\mathrm{adv}}$.]{\includegraphics[width=0.49\columnwidth]{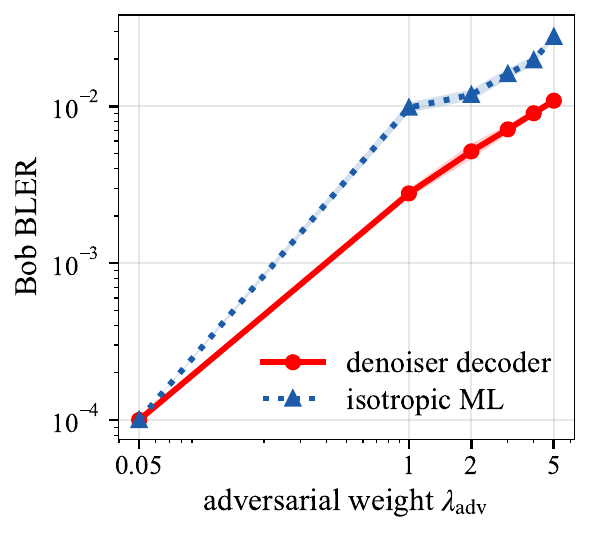}}
\caption{Covertness-reliability trade-off ($N = 1$), by sweeping the adversarial weight $\lambda_{\mathrm{adv}}$ of \eqref{eq:loss}. In (a), Eve's accuracy vs.\ Bob's BLER under the denoiser decoder for $\lambda_{\mathrm{adv}} \in \{0.05, 1, 2, 3, 4, 5\}$ for the proposed scheme and the retrained Gaussian. In (b), Bob's BLER vs.\ $\lambda_{\mathrm{adv}}$ for the denoiser and isotropic ML decoders.}
\label{fig:pareto}
\end{figure}

\subsection{Eavesdropper Channel Quality}\label{sec:numevesnr}
Fig.~\ref{fig:evesnr} varies Eve's SNR from $0$ to $12$~dB at a fixed Bob SNR of $10$~dB and retrains the system at each point. Eve's SNR bin is part of the public tag.
Without a safeguard, the model-aware Eve increases from $0.76$ to $0.87$ and the learned Eve from $0.69$ to $0.80$. With the proposed safeguard the learned Eve stays between $0.54$ and $0.55$, even where Eve's channel is as good as Bob's, whereas the model-aware Eve increases from $0.66$ to $0.74$. The safeguard lowers the model-aware accuracy by $0.11$ to $0.13$ at every SNR, and the Eve's accuracy grows with the SNR because without a shared key  Eve's disadvantage is only her noise level relative to Bob's.  Fig.~\ref{fig:evesnr}(b) shows the reliability cost. As Eve's SNR increases from $0$ to $12$~dB, Bob's BLER under the denoiser decoder increases from $4.4\times 10^{-4}$ to $5.2\times 10^{-3}$, while without a safeguard it stays at the floor of $10^{-4}$.
As Eve's SNR increases, covertness requires the message and dummy codewords to move closer to one another, which makes the decoding harder for Bob.%

\begin{figure}[t]
\centering
\subfloat[Eve accuracy vs. Eve SNR.]{\includegraphics[width=0.49\columnwidth]{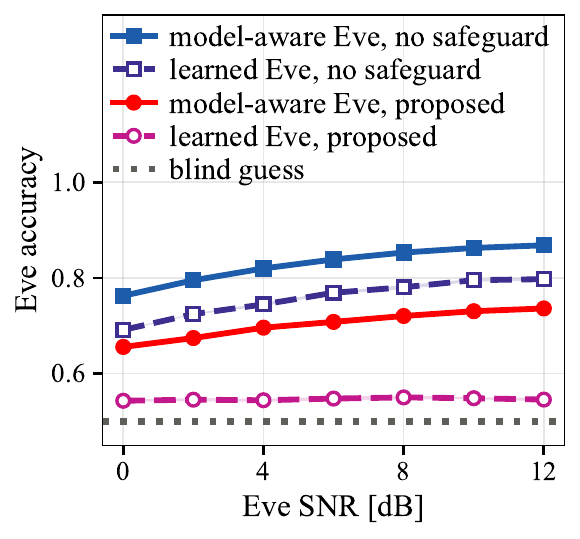}}\hfill
\subfloat[Bob BLER vs. Eve SNR.]{\includegraphics[width=0.49\columnwidth]{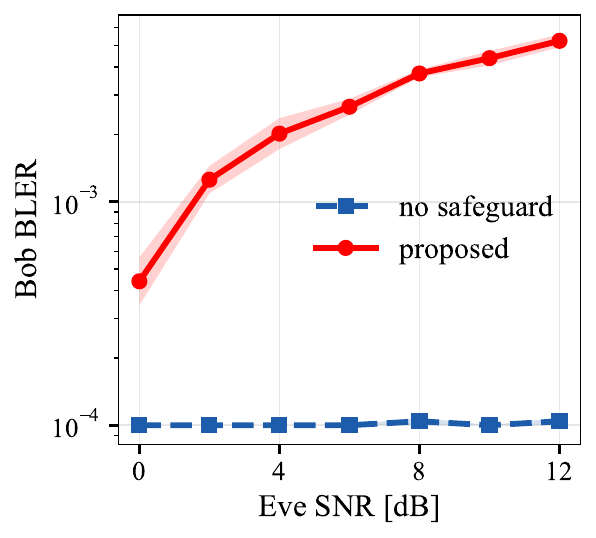}}
\caption{Effect of Eve's SNR at a Bob SNR of $10$~dB ($N = 1$). In (a), Eve's detection accuracy is shown for the proposed safeguard and no safeguard, averaged over ten seeds. In (b), Bob's BLER is shown for the denoiser decoder, and for the isotropic ML without a safeguard. }%
\label{fig:evesnr}
\end{figure}

\subsection{Receiver Comparison}\label{sec:numreceivers}
Fig.~\ref{fig:receivers}(a) compares the seed-free receivers of Section~\ref{sec:receivers} at $N = 1$ for Bob SNRs from $4$ to $14$~dB after training at $10$~dB.
The denoiser decoder changes from $0.17$ at $4$~dB to below $10^{-4}$ at $14$~dB. The isotropic ML, which ignores the structure of the safeguard, reaches $1.1\times 10^{-2}$ at $10$~dB and $1.1\times 10^{-3}$ at $14$~dB.
 Fig.~\ref{fig:receivers}(b) varies the codebook size $M_a$ at the operating point $\lambda = 0.6$ and $\gamma = 0.5$ of Fig.~\ref{fig:pareto}. The two rules coincide up to $M_a = 8$. Beyond, the isotropic ML stays between $3.0\times 10^{-3}$ and $5.9\times 10^{-3}$, whereas the denoiser decoder falls to  $7.9\times 10^{-4}$ at $M_a = 32$ and $3.3\times 10^{-4}$ at $M_a = 64$. At this operating point the safeguard reaches Bob with the same power as the codewords. A larger codebook moves the codewords closer. The isotropic ML treats the safeguard as noise and decodes with a higher BLER, whereas the denoiser decoder reconstructs the safeguard, subtracts the reconstruction, and decodes the codeword against the thermal noise only.

\begin{figure}[t]
\centering
\subfloat[$N = 1$.]{\includegraphics[width=0.49\columnwidth]{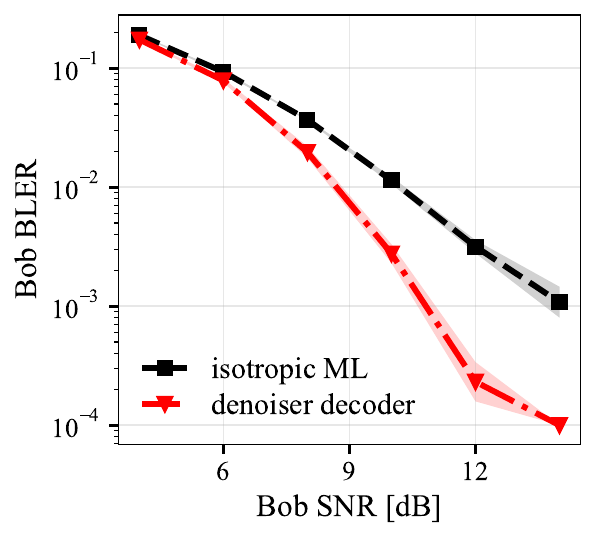}}\hfill
\subfloat[Codebook size, $N = 1$.]{\includegraphics[width=0.49\columnwidth]{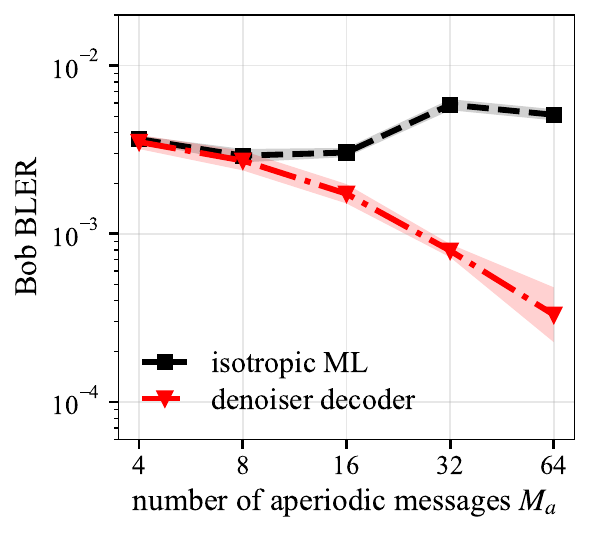}}
\caption{Bob's aperiodic BLER for the seed-free receivers (trained at $10$~dB). (a) The two rules at $N = 1$ versus Bob's SNR; (b) the two rules at $N = 1$ versus the codebook size $M_a$ at $\lambda = 0.6$ and $\gamma = 0.5$.}%
\label{fig:receivers}
\end{figure}

\subsection{Aperiodic Rate}\label{sec:numrate}
Fig.~\ref{fig:rate} varies the aperiodic rate of problem \eqref{eq:opt} through the codebook size $M_a \in \{4, 8, 16, 32, 64\}$ with an equal-size dummy set and $n = 64$. For each $M_a$, the system is retrained at every operating point of $\lambda \in \{0.4, 0.6, 0.8\}$, $\gamma \in \{0.3, 0.5\}$, $\omega \in \{0.5, 0.7\}$ that meets the sensing budget $\epsilon_s = 0.70$~m$^2$, and the figure shows the design with the lowest model-aware accuracy among those with a BLER of at most $\bar\epsilon_a = 3\times 10^{-3}$. In Fig.~\ref{fig:rate}(a), the learned Eve stays between $0.50$ and $0.55$, while the model-aware Eve falls from $0.67$ and $0.68$ at $M_a = 4$ for $N = 1$ and $N = 8$ to between $0.63$ and $0.66$ for $M_a$ from $8$ to $32$ and rises to $0.70$ and $0.71$ at $M_a = 64$. In Fig.~\ref{fig:rate}(b), Bob's BLER decreases from $2.5\times 10^{-3}$ and $3.0\times 10^{-3}$ at $M_a = 4$ for $N = 1$ and $N = 8$ to below $10^{-4}$ at $M_a = 64$, because the cross-entropy term of the loss \eqref{eq:loss} grows with the number of messages.

\begin{figure}[t]
\centering
\subfloat[Eve accuracy vs. $M_a$.]{\includegraphics[width=0.49\columnwidth]{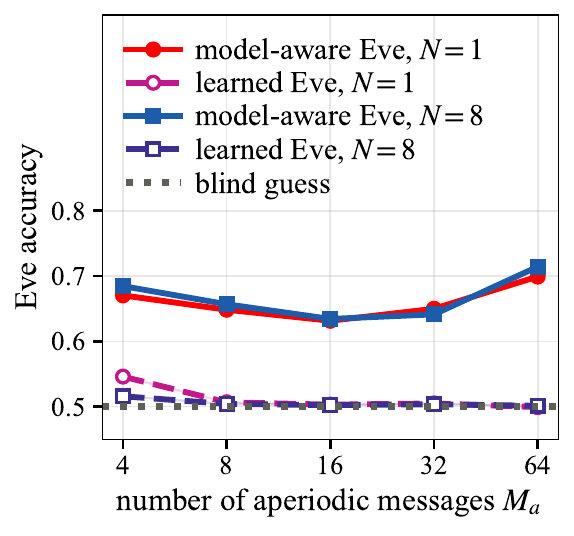}}\hfill
\subfloat[Bob BLER vs. $M_a$.]{\includegraphics[width=0.49\columnwidth]{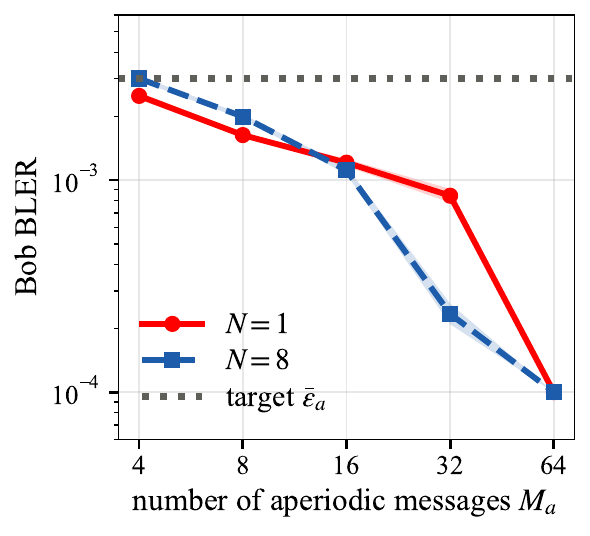}}
\caption{Effect of the number of aperiodic messages $M_a$ on (a) covertness and (b) reliability for $N = 1$ and $N = 8$.}%
\label{fig:rate}
\end{figure}

\subsection{Robustness to Eve Location Error}\label{sec:numrobust}
The transmit design relies on Eve's assumed direction through the safeguard beam \eqref{eq:beams}, and Fig.~\ref{fig:evecsi} quantifies the sensitivity with $N = 8$ at $\omega = 0.5$. We use the uncertainty model \eqref{eq:evecsimodel} with $\Delta_e = 0.1$ and evaluate the design pointwise in the true error $\delta_e \in [-0.1, 0.1]$, where $\delta_e = -0.1$ moves Eve close to Bob's direction. The \emph{nominal design} is trained with Eve at the nominal direction $\hat u_e$.
Under this design, the model-aware Eve lies at $0.69$ at $\delta_e = 0$, $0.81$ at $\delta_e = -0.1$, and $0.53$ at $\delta_e = 0.1$. As shown in Fig.~\ref{fig:evecsi}(b), when Eve's true direction approaches Bob's, the information amplitude $|\alpha_c|$ at Eve increases and the safeguard amplitude $|\alpha_s|$ decreases because of the null at Bob, and Eve receives a stronger signal with weaker masking.
\emph{Robust training} draws $\delta_e$ uniformly from $[-\Delta_e, \Delta_e]$ at every training slot and therefore enforces covertness on average over the errors, whereas the evaluation is pointwise in $\delta_e$. The two designs coincide within $0.01$ at every offset. When Eve is close to Bob's direction, no beamforming can separate the two, and the protection must come from the safeguard, as in the single-antenna case.

\begin{figure}[t]
\centering
\subfloat[Eve accuracy vs. $\delta_e$.]{\includegraphics[width=0.49\columnwidth]{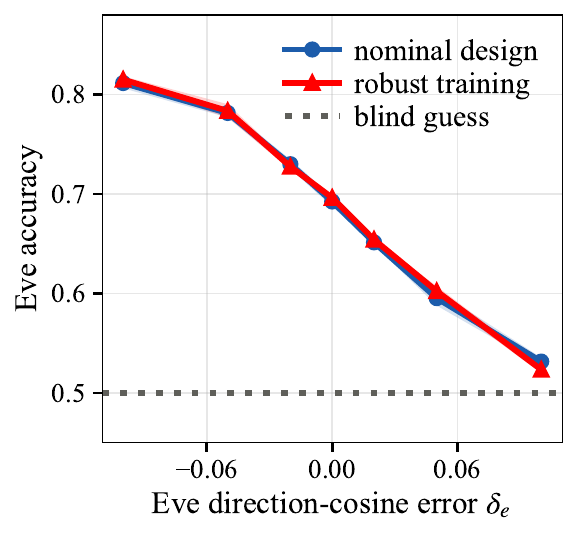}}\hfill
\subfloat[Leakage amplitudes vs. $\delta_e$.]{\includegraphics[width=0.49\columnwidth]{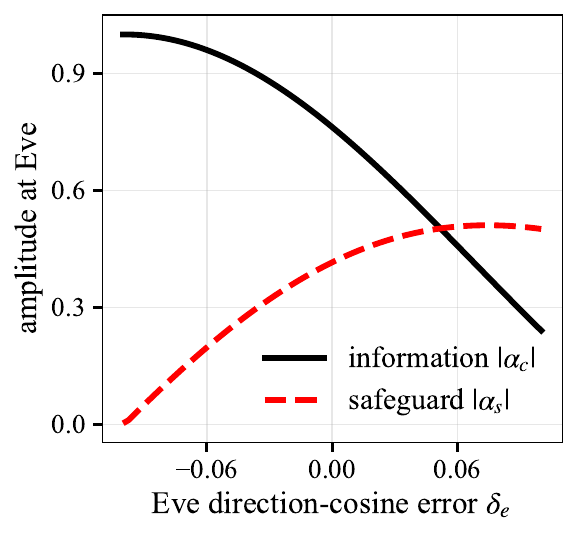}}
\caption{Sensitivity to an error $\delta_e$ in Eve's assumed direction cosine. In (a), Eve's accuracy for the \emph{nominal design} and \emph{robust training}, with the model-aware Eve, averaged over three seeds. In (b), the information and safeguard amplitudes $|\alpha_c|$ and $|\alpha_s|$ at Eve's true direction. }%
\label{fig:evecsi}
\end{figure}

\section{Conclusion}\label{sec:conclusion}
We have developed a keyless diffusion-based covert transmission framework for a MISO UAV-ISAC system with real-time target tracking. The framework integrates three design layers, the sensing, security, and beamforming layers. The sensing layer tracks the target with an EKF and triggers the aperiodic traffic by an estimator-independent rule. The security layer superposes a message-agnostic DDIM safeguard whose trained distribution is public and makes triggered and non-triggered slots statistically similar at the eavesdropper. The beamforming layer splits the safeguard between region illumination and eavesdropper masking. We have formulated the joint design of the three layers as a rate maximization under covertness, reliability, and sensing constraints and solved the problem by end-to-end training. The training enforces the covertness constraint through the MMD and an adversarial classifier, and the legitimate user decodes the aperiodic messages without a shared key. In the numerical results, the trigger reproduced the target's presence in $120$ of $123$ in-region slots.
With $8$ antennas and an Eve SNR of $5$~dB, the learned Eve stayed near the blind-guess level of $0.5$ for every safeguard beam split, a model-aware Eve reached $0.71$ to $0.73$ for an eavesdropper inside Bob's main lobe and $0.53$ to $0.55$ outside it, and the average position PCRB varied only between $0.59$ and $0.70$~m$^2$. For future work, detectors that combine observations across many slots \cite{Bash2013} and scenarios with multiple targets \cite{Ruan2025} and users are of interest. A further direction for future study is an information-theoretic analysis of generative randomized encoders that connects the trained safeguard distribution to bounds on the achievable covert rate \cite{Bloch2016}.


\begin{thebibliography}{30}

\bibitem{Liu2022}
F.~Liu et al., ``Integrated sensing and communications: Toward dual-functional wireless networks for 6G and beyond,'' \emph{IEEE J. Sel. Areas Commun.}, vol.~40, no.~6, pp.~1728--1767, June~2022.

\bibitem{Visa2024}
V. Koivunen \emph{et al.}, ``Multicarrier ISAC: Advances in waveform design, signal processing, and learning under nonidealities,'' \emph{IEEE Signal Process. Mag.}, vol.~41, no.~5, pp.~17--30, Sept. 2024.

\bibitem{Ahmadi2026}
M. J. Ahmadi, R. F. Schaefer, and H. V. Poor, ``Integrated sensing and communications for unsourced random access: Fundamental limits and practical model,'' \emph{IEEE Trans. Wireless Commun.}, vol.~25, pp.~16547--16561, 2026.

\bibitem{NikbakhtISIT24}
H. Nikbakht, M. Wigger, S. Shamai (Shitz), and H. V. Poor, ``Integrated sensing and communication in the finite blocklength regime,'' in \emph{Proc. IEEE Int. Symp. Inf. Theory (ISIT)}, Athens, Greece, 2024, pp. 2790--2795.

\bibitem{Qin2025}
P. Qin \emph{et al.}, ``URLLC-aware trajectory plan and beamforming design for NOMA-aided UAV integrated sensing, communication, and computation networks,'' \emph{IEEE Trans. Veh. Technol.}, vol.~74, no.~1, pp.~1610--1625, Jan. 2025.

\bibitem{Luat2025}
L. B. Luat, N. C. Luong and D. I. Kim, ``Integrated radar and communication in ultra-reliable and low-latency communications-enabled UAV networks,'' \emph{IEEE Trans. Veh. Technol.}, vol.~74, no.~8, pp.~13133--13138, Aug. 2025.

\bibitem{NikbakhtISIT25}
H. Nikbakht, Y. C. Eldar, and H. V. Poor, ``A MIMO ISAC system for ultra-reliable and low-latency communications,'' in \emph{Proc. IEEE Int. Symp. Inf. Theory (ISIT)}, Ann Arbor, MI, USA, 2025, pp. 1--6.

\bibitem{Ahmed2025}
M. Ahmed \emph{et al.}, ``Advancements in UAV-based integrated sensing and communication: A comprehensive survey,'' \emph{IEEE Open J. Commun. Soc.}, vol.~7, pp.~9145--9182, 2026.

\bibitem{Liu2020PB}
F. Liu, W. Yuan, C. Masouros, and J. Yuan, ``Radar-assisted predictive beamforming for vehicular links: Communication served by sensing,'' \emph{IEEE Trans. Wireless Commun.}, vol.~19, no.~11, pp.~7704--7719, Nov. 2020.

\bibitem{NikbakhtAsilomar24}
H. Nikbakht, M. Wigger, S. Shamai (Shitz), and H. V. Poor, ``A memory-based reinforcement learning approach to integrated sensing and communication,'' in \emph{Proc. Asilomar Conf. Signals, Syst., Comput.}, Pacific Grove, CA, USA, 2024, pp. 433--437.

\bibitem{Su2021}
N. Su, F. Liu, and C. Masouros, ``Secure radar-communication systems with malicious targets: Integrating radar, communications and jamming functionalities,'' \emph{IEEE Trans. Wireless Commun.}, vol.~20, no.~1, pp.~83--95, Jan. 2021.

\bibitem{Boroujeni2026}
A. Khandan Boroujeni \emph{et al.}, ``Frequency hopping waveform design for secure integrated sensing and communications,'' \emph{IEEE Trans. Inf. Forensics Security}, vol.~21, pp.~3719--3734, 2026.

\bibitem{Goel2008}
S. Goel and R. Negi, ``Guaranteeing secrecy using artificial noise,'' \emph{IEEE Trans. Wireless Commun.}, vol.~7, no.~6, pp.~2180--2189, June 2008.

\bibitem{Bash2013}
B. A. Bash, D. Goeckel, and D. Towsley, ``Limits of reliable communication with low probability of detection on AWGN channels,'' \emph{IEEE J. Sel. Areas Commun.}, vol.~31, no.~9, pp.~1921--1930, Sept. 2013.

\bibitem{Zhou2019}
X. Zhou \emph{et al.}, ``Joint optimization of a UAV's trajectory and transmit power for covert communications,'' \emph{IEEE Trans. Signal Process.}, vol.~67, no.~16, pp.~4276--4290, Aug. 2019.

\bibitem{Tang2024}
R. Tang \emph{et al.}, ``Dual-functional artificial noise (DFAN) aided robust covert communications in integrated sensing and communications,'' \emph{IEEE Trans. Commun.}, vol.~73, no.~2, pp.~1072--1086, Feb. 2025.

\bibitem{Yao2025}
J. Yao \emph{et al.}, ``UAV-enabled secure ISAC against dual eavesdropping threats: Joint beamforming and trajectory design,'' \emph{IEEE Wireless Commun. Lett.}, vol.~14, no.~10, pp.~3199--3203, Oct. 2025.

\bibitem{Song2021}
J. Song, C. Meng, and S. Ermon, ``Denoising diffusion implicit models,'' in \emph{Proc. Int. Conf. Learn. Represent.}, 2021.

\bibitem{Ho2020}
J. Ho, A. Jain, and P. Abbeel, ``Denoising diffusion probabilistic models,'' in \emph{Adv. Neural Inf. Process. Syst.}, vol.~33, pp.~6840--6851, 2020.

\bibitem{ZhaoGenAI2024}
C. Zhao \emph{et al.}, ``Generative AI for secure physical layer communications: A survey,'' \emph{IEEE Trans. Cogn. Commun. Netw.}, vol.~11, no.~1, pp.~3--26, Feb. 2025.

\bibitem{Wang2025}
J. Wang \emph{et al.}, ``Generative AI based secure wireless sensing for ISAC networks,'' \emph{IEEE Trans. Inf. Forensics Security}, vol.~20, pp.~5195--5210, 2025.

\bibitem{Gretton2012}
A. Gretton \emph{et al.}, ``A kernel two-sample test,'' \emph{J. Mach. Learn. Res.}, vol.~13, pp.~723--773, 2012.

\bibitem{Goodfellow2014}
I. Goodfellow \emph{et al.}, ``Generative adversarial nets,'' in \emph{Adv. Neural Inf. Process. Syst.}, vol.~27, pp.~2672--2680, 2014.

\bibitem{Stove2004}
A. G. Stove, A. L. Hume, and C. J. Baker, ``Low probability of intercept radar strategies,'' \emph{IEE Proc. Radar Sonar Navig.}, vol. 151, no. 5, pp. 249--260, Oct. 2004.

\bibitem{Pace2009}
P. E. Pace, \emph{Detecting and Classifying Low Probability of Intercept Radar}, 2nd ed. Norwood, MA, USA: Artech House, 2008.

\bibitem{Stoica2007}
P. Stoica, J. Li, and Y. Xie, ``On probing signal design for MIMO radar,'' \emph{IEEE Trans. Signal Process.}, vol.~55, no.~8, pp.~4151--4161, Aug. 2007.

\bibitem{Richards2014}
M. A. Richards, \emph{Fundamentals of Radar Signal Processing}, 3rd ed., New York, NY, USA: McGraw-Hill, 2022.

\bibitem{Kay1993}
S. M. Kay, \emph{Fundamentals of Statistical Signal Processing: Estimation Theory}. Upper Saddle River, NJ, USA: Prentice-Hall, 1993.

\bibitem{Poor}
H. V. Poor, \emph{An Introduction to Signal Detection and Estimation}, Springer Science \& Business Media, 2013.

\bibitem{BarShalom2001}
Y. Bar-Shalom, X. R. Li, and T. Kirubarajan, \emph{Estimation with Applications to Tracking and Navigation}, New York, NY, USA: Wiley, 2001.

\bibitem{Tichavsky1998}
P. Tichavsk\'y, C. H. Muravchik, and A. Nehorai, ``Posterior Cram\'er-Rao bounds for discrete-time nonlinear filtering,'' \emph{IEEE Trans. Signal Process.}, vol.~46, no.~5, pp.~1386--1396, May 1998.

\bibitem{Jiang2024}
Y. Jiang, Q. Wu, W. Chen, and K. Meng, ``UAV-enabled integrated sensing and communication: Tracking design and optimization,'' \emph{IEEE Commun. Lett.}, vol.~28, no.~5, pp.~1024--1028, May 2024.

\bibitem{DiDonato1961}
A. R. DiDonato and M. P. Jarnagin, ``Integration of the general bivariate Gaussian distribution over an offset circle,'' \emph{Math. Comput.}, vol.~15, no.~76, pp.~375--382, 1961.

\bibitem{Polyanskiy2010}
Y. Polyanskiy, H. V. Poor, and S. Verd\'u, ``Channel coding rate in the finite blocklength regime,'' \emph{IEEE Trans. Inf. Theory}, vol.~56, no.~5, pp.~2307--2359, May 2010.

\bibitem{Nikbakht2025}
H. Nikbakht, Y. C. Eldar, and H. V. Poor, ``An integrated sensing and communication system for time-sensitive targets with random arrivals,'' \emph{IEEE J. Sel. Areas Commun.}, vol. 44, pp. 3332--3348, 2026.

\bibitem{Chung2023}
H. Chung \emph{et al.}, ``Diffusion posterior sampling for general noisy inverse problems,'' in \emph{Proc. Int. Conf. Learn. Represent.}, 2023.

\bibitem{Vincent2011}
P. Vincent, ``A connection between score matching and denoising autoencoders,'' \emph{Neural Comput.}, vol.~23, no.~7, pp.~1661--1674, July 2011.

\bibitem{Efron2011}
B. Efron, ``Tweedie's formula and selection bias,'' \emph{J. Amer. Statist. Assoc.}, vol.~106, no.~496, pp.~1602--1614, 2011.

\bibitem{OShea2017}
T. O'Shea and J. Hoydis, ``An introduction to deep learning for the physical layer,'' \emph{IEEE Trans. Cogn. Commun. Netw.}, vol.~3, no.~4, pp.~563--575, Dec. 2017.

\bibitem{Boyd2004}
S. Boyd and L. Vandenberghe, \emph{Convex Optimization}. Cambridge, UK: Cambridge University Press, 2004.

\bibitem{Kozachenko1987}
L. F. Kozachenko and N. N. Leonenko, ``Sample estimate of the entropy of a random vector,'' \emph{Probl. Inf. Transm.}, vol.~23, no.~2, pp.~95--101, 1987.

\bibitem{HoSalimans2022}
J. Ho and T. Salimans, ``Classifier-free diffusion guidance,'' in \emph{Proc. NeurIPS Workshop Deep Gener. Models Downstream Appl.}, 2021.

\bibitem{Yang2022}
H. Yang, E. Liang, M. Pan, and R. D. Wesel, ``CRC-aided list decoding of convolutional codes in the short blocklength regime,'' \emph{IEEE Trans. Inf. Theory}, vol.~68, no.~6, pp.~3744--3766, June 2022.

\bibitem{Perez2018}
E. Perez \emph{et al.}, ``FiLM: Visual reasoning with a general conditioning layer,'' in \emph{Proc. AAAI Conf. Artif. Intell.}, 2018.

\bibitem{Ruan2025}
H. Ruan, H. Nikbakht, R. Zhang, H. Chen, and Y. C. Eldar, ``Near-field integrated sensing and communication for multi-target indication,'' in \emph{Proc. IEEE 36th Int. Symp. Pers., Indoor Mobile Radio Commun. (PIMRC)}, Istanbul, Turkiye, 2025, pp.~1--6.

\bibitem{Bloch2016}
M. R. Bloch, ``Covert communication over noisy channels: A resolvability perspective,'' \emph{IEEE Trans. Inf. Theory}, vol.~62, no.~5, pp.~2334--2354, May 2016.

\end{thebibliography}
\end{document}